\documentclass{article}
\usepackage[utf8]{inputenc}
\usepackage[left=2.5cm,right=2.5cm,
    top=2cm,bottom=2cm,bindingoffset=0cm]{geometry}
\usepackage[utf8]{inputenc} 
\usepackage[T2A]{fontenc}
\usepackage[english]{babel}
\usepackage{amsmath,amsthm,amssymb}
\usepackage{proof}
\usepackage{multicol}
\usepackage{hyperref}
\usepackage{float}
\floatstyle{boxed} 
\restylefloat{figure}
\usepackage{indentfirst}
\usepackage{amsfonts} 
\usepackage{csquotes}
\usepackage{stmaryrd}
\usepackage{cases}
\usepackage{xcolor}
\usepackage{authblk}
\usepackage{apacite}

\usepackage{tikz}
\tikzstyle{worldm}=[circle,draw,minimum size=1cm,fill=HSEblue!1, font=\sffamily\footnotesize]
\usetikzlibrary{positioning,arrows,calc}
\tikzset{
modal/.style={>=stealth',shorten >=1pt,shorten <=1pt,auto,node distance=2cm,
semithick},
world/.style={circle,draw,minimum size=1cm},
point/.style={circle,draw,inner sep=0.5mm,fill=black},
reflexive above/.style={->,loop,looseness=7,in=120,out=60},
reflexive below/.style={->,loop,looseness=7,in=240,out=300},
reflexive left/.style={->,loop,looseness=7,in=150,out=210},
reflexive right/.style={->,loop,looseness=7,in=30,out=330}
}

\newtheorem{definition}{Definition}
\newtheorem{theorem}{Theorem}
\newtheorem{example}{Example}
\newtheorem{lemma}{Lemma}
\newtheorem{remark}{Remark}
\newenvironment{notation}{\smallskip\par\noindent {\bf Notation.}}{\smallskip\par\noindent}

\newcommand{\folp}{\mbox{\it FOLP}^\Box_0}

\newcommand{\folpcs}[1]{\mbox{\it FOLP}^\Box_{#1}}  
\newcommand{\folpcsv}[1]{ \scriptsize \mbox{\it FOLP}^\Box_{CS}{(#1)}}
\newcommand{\cM}{{\mathcal M}}
\newcommand{\cE}{{\mathcal E}}
\newcommand{\cI}{{\mathcal I}}
\newcommand{\cL}{{\mathcal L}}

\newcommand{\term}[2]{#1\! :_{#2}\!}

\renewcommand{\models}{\Vdash}
\title{On Semantics of First-Order Justification Logic with Binding Modalities}
\author[1]{Tatyana L. Yavorskaya}
\author[1]{Elena L. Popova}
\affil[1]{Steklov Mathematical Institute of Russian Academy of Sciences, Moscow, Russia}
\date{}

\begin{document}

\maketitle

\abstract{We introduce the first order logic of proofs $FOLP^\Box$ in the joint language combining justification terms and binding modalities. The main issue is Kripke--style semantics for $FOLP^\Box$. We describe models for $FOLP^\Box$ in terms of valuations of individual variables instead of introducing constants to the language. This approach requires a new format of the evidence function.  This allows us to assign semantic meaning to formulas that contain free variables.  The main results are soundness and completeness of $FOLP^\Box$ with respect to the described semantics.}

\section*{Introduction}

In this manuscript we describe and study the hybrid first order logic with justifications and modality and semantics for it.

Justification logics were introduced by S. Artemov in \cite{a1995} (see \cite{a2001} for more details). The original justification logic described in \cite{a1995}  is  the Propositional Logic of Proofs, $LP$.  It is formulated in an extension of the propositional language by proof terms and an operator of the type {\it ``term:formula''} which represents proof--theorem relation in formal systems like Peano Arithmetic. Proof terms (called justification terms in our context) are constructed from proof variables and constants (called justification variables and constants now) with the help of operations on proofs (called operations on justifications). The first semantics described for $LP$ is an arithmetical interpretation in which proof terms are interpreted as codes of derivations in Peano Arithmetic and formulas correspond to arithmetical sentences. S. Artemov proved  completeness of $LP$ with respect to arithmetical semantics.

Another important feature of $LP$ proved by S. Artemov is its connection to propositional modal logic. Namely, the so-called forgetful projection of $LP$ which replaces all occurrences of proof terms in $LP$-formulas by the modality $\Box$ is modal logic $S4$. In other words, for each theorem of $S4$ one can recover justification terms for all occurrences of $\Box$ such that the resulting formula is provable in $LP$. This result is called {\it Realization of $S4$ in $LP$}; it shows that $LP$ is an explicit counterpart of $S4$, that is, proof terms represent explicitly the information hidden under the existential quantifier in provability reading of the modality $\Box$. Together with the arithmetical completeness theorem for $LP$, this result yields exact provability semantics for $S4$ and, therefore, for the intuitionistic logic.

Further investigations lead to finding explicit counterparts of other modal logics, the resulting logics were called Justification Logics. 
Artemov's method for proving realization for $S4$ was generalized to some of its subsystems in \cite{b2000}.
Two variants of justification counterpart of $S5$ were presented in \cite{aksh1999} and \cite{r2006}.
The uniform constructive method for realizing all normal modal logics formed by axioms {\bf d}, {\bf t}, {\bf b}, {\bf 4} and {\bf 5} is described in \cite{bgk2010}. All the justification logics corresponding to different axiomatizations of such modal logics (there are 24 of them for only 15 modal logics) are studied in \cite{goe2012} and \cite{gk2012}, where the more general method for proving realization was developed. In \cite{g2011}, \cite{sh2016} and \cite{f2020b} different justification counterparts for provability logic $GL$ were found and realization for them  was proven. Nonconstructive semantical method for proving realization was developed for $S4$ and $LP$ and extended to other pairs of a modal logic and its justification counterpart in works of M. Fitting, 
for example \cite{f2016} gives a description of this method and its application to the broad range of modal logics.

Different kinds of semantics for justification logic which are not based on provability were discovered later. The first non--arithmetical semantics for $LP$ was introduced in  \cite{m1997}. M. Fitting in \cite{f2005} presented Kripke--style semantics for a wide range of justification logics and  proved completeness of these logics with respect to it, that allowed to prove realization results  semantically (see also \cite{f2016}).  There are also other kinds of semantics.
Let us also mention topological semantics for several operation--free 
 fragment of the hybrid logic $S4+LP$ from \cite{an2008}, 
game semantics for $LP$ from \cite{r2009}, modular models described and studied in \cite{a2012} and \cite{ks2012}, subset models from 
\cite{ls2019,ls2021} and neighborhood models from \cite{g2024}. In this paper we work with Fitting semantics.

Our focus point in this manuscript is first-order justification logic.
In the context of provability, the first-order logic of proofs $FOLP$ was studied by S. Artemov and T. Yavorskaya  in \cite{y1998}, \cite{ay2001}. 
They described several variants of the appropriate language for $FOLP$. The essential point here was to provide syntactical constructions able to capture the
difference between global and local parameters in proofs.  Namely, for a formula $\Phi(x)$ with a free parameter $x$ one should differ between two propositions 
\begin{enumerate}
    \item 
``$t$ is a proof of a formula $\Phi$ which contains $x$ free'' and  
\item 
``$t$ is a proof of a formula $\Phi$ for a given value of $x$'', proposition with the parameter $x$. 
\end{enumerate}
For this purpose the justification operator in $FOLP$ is indexed with finite sets of individual variables which remain free in the proposition about provability. For the example above, 
$\term{t}{\varnothing} \Phi(x)$ means that ``$t$ is a proof of the formula $\Phi(x)$, and $x$ is a free variable of $\Phi$'' (here $x$ is a local parameter, it is bound in $\term{t}{\varnothing}\Phi(x)$), and $\term{t}{x} \Phi(x)$ means ``for the particular value of a parameter $x$, $t$ is a proof of the formula $\Phi(x)$'' ($x$ is a global parameter, it is free
in $\term{t}{x} \Phi(x)$). 

The Realization Theorem for first-order modal logic $S4$ in $FOLP$ was proven in \cite{ay2011} using cut-free sequent calculus for first order $S4$. 
Kripke style semantics for first-order logic of proofs was presented and studied by M. Fitting in \cite{f2011}, \cite{f2014}. He introduced  models for $FOLP$ 
(called Fitting models) and proved soundness and completeness of $FOLP$  with respect to them.  Roughly speaking, Fitting models are Kripke models with growing domain for first order $S4$ supplied with the evidence function assigning sets of possible world to each closed formula and justification term. In \cite{fs2020} constant domain semantics is described and the corresponding first order justification logic is found.

Similarly with the language of first order justification logic, the first order modal language can be supplied with the syntactical tools for dealing with  global and local parameters. So-called binding modalities $\Box_X$ which bind all variables other than variables from $X$  were introduced in \cite{ay2016}. 
In this paper the extension of $S4^b$ with binding modalities is presented and Gentzen-style calculus admitting cut-elimination is described. 
Further study of binding modalities can be found in \cite{f2020a}.

In the current work we introduce a first order logic of proofs $FOLP^\Box$ in the joint language combining justification terms and binding modalities. 
Such hybrid logics in propositional language for the first time  appeared  in provability context, namely, they combine $Box$ for provability in formal arithmetic and justification terms for arithmetical derivations. For the survey of these  logics see \cite{ys1997}, 
\cite{an2004}, \cite{goris2007} and \cite{goris2009}. Epistemic logic with justification were introduced and studied in \cite{an2004}, \cite{an2005}, \cite{k2010}.
The main issue of this work is Kripke-style semantics for $FOLP^\Box$. We apply Fitting models to deal not only with justification terms but also with modalities. Semantics introduced in our manuscript combine  models for $S4^b$ from \cite{ay2016} and a version of Fitting models for $FOLP$ close to  described in \cite{f2011}. 
Namely, instead of introducing individual constants to the  language as it is done in \cite{f2014}
we supply models with valuations of individual variables as in \cite{f2011}. This allows to assign semantic meaning to formulas that contain free variables.  The main results are soundness and completeness of $FOLP^\Box$ with respect to the described semantics. Similarities and differences of our models and those from \cite{f2011} are discussed in details after definition of a model.  

The structure of the paper is the following. In \ref{s_lang} we describe the language  and axioms of first order justification logic with binding modalities and prove simple facts about it. In \ref{s_model} we describe models for our logic, discuss several examples and prove soundness. In \ref{s_completeness} the completeness is proven.

\section{Language and Axioms}
\label{s_lang}

We use the  alphabet consisting of the following symbols.  
\begin{itemize}  
\item
$Var=\{x_1,x_2,\ldots\}$ is a set of individual variables,  
\item
$JVar=\{p_1,p_2,\ldots\}$ is a set of justification variables, 
\item
$JConst=\{c_1,c_2,\ldots\}$ is a set of justification constants. 
\item
$Pred=\{P_1^{n_1},P_2^{n_2},\ldots\}$ is a set of predicate symbols. Here the upper index denotes arity (the number of arguments) of the symbols, we assume that there are infinitely many symbols of any arity.
\item 
{\Large $\cdot$} and $+$ are symbols for binary operations on justifications, $!$ and $gen_x$ for each $x\in Var$ are  symbols for unary operations.
\item
unary modality $\Box_X$ and justification operator $\term{t}{X}$ for every finite subset $X$ of $Var$.
\item 
parentheses, quantifiers $\forall$, $\exists$ and boolean connectives
$\neg,\ \land,\ \lor,\ \to$, $\leftrightarrow$. убрала выделение
\end{itemize}
\begin{definition}
\em
We define the language $\mathcal{L}$ of first-order justification logic with binding modalities as follows.
 {\it Justification terms} are constructed from justification variables and constants by means of operations on justifications:  
$$t::= p_i \mid c_i \mid (t \cdot t) \mid (t + t)  \mid  !t \mid gen_x(t)$$ 
where $p_i \in JVar$, $c_i \in  JConst$, $x\in Var$. We denote the set of justification terms by $JTerm$.

{\it Formulas and free and bound occurrences of individual variables }  are defined by induction in the standard way:
\begin{itemize}
\item
$Q^n_i(x_1, \dots, x_n)$ where	$Q^n_i$ is a predicate symbol of arity $n$ and $x_i$ are individual variables is a formula, all occurrences of variables are free; 
\item
if $\Phi_1$, $\Phi_2$ are formulas then 
$\neg \Phi_1$, $(\Phi_1 \wedge \Phi_2)$, $(\Phi_1 \lor \Phi_2)$, $(\Phi_1\to \Phi_2)$, $(\Phi_1 \leftrightarrow \Phi_2)$ are formulas, occurrences of variables in the compound formulas are free if they are free in their components and bound if they are bound in the components;
\item
if $\Phi$ is a formula then $\forall x \Phi$, $\exists x\Phi$
are formulas, all occurrences of variables other then $x$ remain free or bound as they are in $\Phi$, all occurrences of $x$ are bound;
\item
if $\Phi$ is a formula then
$\Box_X \Phi$ and $\term{t}{X} \Phi$ are formulas
where $t$ is a justification term and  $X$ is a finite set of variables.
All bound occurrences of variables in $\Phi$ remain bound in $\Box_X\Phi$ and 
$\term{t}{X}\Phi$.
A free occurrence of a variable $x$ in a formula $\Phi$ remains
free  in  formulas $\Box_X\Phi$ and $\term{t}{X}\Phi$ if $x\in X$, otherwise it becomes bound. All occurrences of variables in the lower index $X$ in formulas $\Box_X\Phi$ and  
$\term{t}{X}\Phi$ are free.
\end{itemize}
\end{definition}

Note that justification terms do not contain individual variables. In $gen_x(t)$ $x$ is not an occurrence of a variable but just a lower index.  

We denote the set of formulas by $Fm$. 
By $FV(\Phi)$ we denote the set of free variables of the formula $\Phi$. Then
$FV(\Box_X\Phi) = FV(\term{t}{X}\Phi) = X$.

\begin{remark}
\em
We may restrict the language and admit only $\neg$, $\forall$ and $\land$ as basic symbols, then $\lor$, $\to$, $\leftrightarrow$ and $\exists$ are used as the standard abbreviations.
\end{remark}

\begin{example} 
\em
The occurrence of $x$ in formula $\Box_{\varnothing} P(x)$ is bound; similarly, $x$ is bound in $\term{t}{\varnothing} P(x)$. While the
occurrences of $x$ are free in both $\Box_{\{ x\}}P(x)$ and  $\term{t}{\{x\}} P(x)$. 
\end{example}

Note that the standard first-order modality $\Box\Phi$ corresponds to $\Box_{FV(\Phi)}\Phi$, so we keep the notation $\Box\Phi$ for $\Box_{FV(\Phi)}\Phi$. 

We will use following abbreviations: $\Box \Phi$ and $\term{t}{}\Phi$ for $\Box_{FV(\Phi)}\Phi$ and $\term{t}{FV(\Phi)}\Phi$, respectively; $\term{t}{xy}\Phi$ and $\Box_{xy}\Phi$ for $\term{t}{\{x,y \}}\Phi$ and $\Box_{\{x,y \}} \Phi$, respectively. We also use vector notation $\overrightarrow{X}$ for $x_1, \dots, x_n$.

There are two types of substitutions in the language $\mathcal{L}$,  
substitution of justification terms for justification variables and replacement of free individual variables by other individual variables. We use the same notation for them, namely, by $\Phi [p/t]$ we denote the result of substituting justification term $t$ for all occurrences of justification variable $p$ everywhere in $\Phi$, and by  
$\Phi [x/y]$ the result of substituting variable $y$ for all free occurrences of $x$ everywhere in $\Phi$. For the latter, we assume as usually that 
$x$ does not occur in the scope of quantifiers on $y$ in $\Phi$.

\subsection{Logic $\folpcs{CS}$}

\begin{definition}
\em
\label{def_axioms}
$\folpcs{0}$ has the following  axiom schemata.

\medskip\noindent
\begin{tabular}{ll@{\qquad\quad}ll}
(A0) &  classical axioms of first-order logic & & \\[7pt]
(A1) &  $\term{t}{X \cup \{y\} } \Phi \to \term{t}{X} \Phi, \ y \not \in FV(\Phi)$ & 
(A1$'$) &  $\Box_{X \cup \{y\}} \Phi \to \Box_X \Phi, \ y \not \in FV(\Phi)$\\[7pt]
(A2) &  $\term{t}{X} \Phi \to \term{t}{X \cup \{y\} } \Phi$  & 
(A2$'$) &  $\Box_X \Phi \to \Box_{X\cup \{y\}} \Phi$ \\[7pt]
(A3) & $\term{t}{X} \Phi \to \Phi$ & 
 (A3$'$) & $\Box_X \Phi \to  \Phi $ \\[7pt]
 (A4) &  $\term{t}{X} (\Phi \to \Psi) \to (\term{s}{X} \Phi \to \term{[t \cdot s]}{X} \Psi)$ & 
(A4$'$) & $\Box_X (\Phi \to \Psi) \to (\Box_X \Phi \to \Box_X \Psi)$ \\[7pt]
(A5) & $\term{t}{X} \Phi \to \term{[t + s]}{X} \Phi$  & &  \\
& $\term{s}{X} \Phi \to \term{[t + s]}{X} \Phi$ & & \\[7pt]
(A6) &  $\term{t}{X} \Phi \to \term{!t}{X} \term{t}{X} \Phi$ & 
(A6$'$) & $\Box_X \Phi \to \Box_X \Box_X \Phi$ \\[7pt]
(A7) &  $\term{t}{X} \Phi \to \term{gen_x (t)}{X} \forall x \Phi, \ x \not \in X$
&          
(A7$'$) &  $\Box_X \Phi \to \Box_X \forall x \Phi, \ x \not \in X$ \\[7pt]
(A8) &  $\term{t}{X}\Phi \to \Box_X \Phi$ & & \\[7pt]

\end{tabular}

Rules of inference:
\begin{multicols}{2}
\begin{itemize} 
	\item[(R1)] $\vdash \Phi, \ \Phi \to \Psi$ $\Rightarrow$ $\vdash \Psi$
	\item[(R2)] $\vdash \Phi$ $\Rightarrow$ $\vdash \forall x \Phi$
	\item[(R3)] $\vdash \Phi$ $ \Rightarrow$ $ \vdash \Box_\varnothing \Phi$
	\item[] modus ponens
	\item[] generalization
	\item[] necessitation
\end{itemize}
\end{multicols}
\end{definition}

\begin{remark}
\em

    \begin{itemize}
		\item
		The standard derivations in first order logic show that the following two Bernays' rules are derivable in $\folpcs{}$: if $x\not\in FV(\Phi)$ then   
		$$
		\vdash \Phi\to\Psi\ \Rightarrow\ \vdash \Phi\to\forall x \Psi
		\ \mbox{ and }\ 
		\vdash \Psi\to\Phi  \Rightarrow  \vdash \exists x \Psi\to \Phi.
		$$
		In what follows we use these rules along with the Generalization Rule.
    \item 
    The following generalization of necessitation rule is derivable in $\folpcs{0}$ with the help of axiom (A2$'$)
    $\vdash \Phi$ $ \Rightarrow$ $ \vdash \Box_X\Phi$ for any finite set of individual variables $X$. We use this generalization when needed. 
        \item 
        (A3) is derivable from other axioms:
        $$
        \begin{array}{ll}
        \term{t}{X}\Phi\to\Box_X\Phi & \mbox{axiom (A8)} \\
        \Box_X\Phi\to\Phi & \mbox{axiom (A3$'$)}
        \\
        \term{t}{X}\Phi\to \Phi & \mbox{by syllogism}\\
        \end{array}
        $$
        \item 
        (A7$'$) is derivable from other axioms:
        $$
        \begin{array}{ll}
        \Box_X\Phi\to\Phi & \mbox{axiom (A3$'$)}
        \\
        \Box_X\Phi\to\forall x\Phi & \mbox{by Bernays rule}
        \\
        \Box_X(\Box_X\Phi\to\forall x\Phi) & \mbox{by necessitation rule (R3) }
        \\
        \Box_X\Box_X\Phi\to\Box_X\forall x\Phi & \mbox{by normality axiom (A4$'$)}
        \\
        \Box_X\Phi\to\Box_X\Box_X\Phi & \mbox{axiom (A6$'$)}
        \\
        \Box_X\Phi\to \Box_X\forall x\Phi
         & \mbox{by syllogism}\\
        \end{array}
        $$
                \item
        $\Box_X  \Box_{X \cup \{ y\} } \Phi \to \Box_X \Phi$
        is a theorem of $\folpcs{0}$:
$$
        \begin{array}{ll}
        \Box_{X\cup \{y\}}\Phi\to\Phi & \mbox{axiom (A3$'$)}
        \\
        \Box_X(\Box_{X\cup \{y\}}\Phi\to \Phi) & \mbox{by necessitation rule (R3)}
        \\
        \Box_X(\Box_{X\cup \{y\}}\Phi\to \Phi) \to (\Box_X\Box_{X\cup \{y\}}\Phi\to \Box_X \Phi) & \mbox{by axiom (A4$'$)}
        \\
        \Box_X\Box_{X\cup \{y\}}\Phi\to \Box_X \Phi & \mbox{by modus ponens (R1)}
        \end{array}
        $$ 
        \item 
        If $y \not \in X$, then 
        $\folpcs{0}\vdash\Box_X \Phi \leftrightarrow\Box_X \forall y \Phi$. Implication ``left-to-right'' can be derived using $(A7')$; implication ``right-to-left'' is derived below:
$$
      \begin{array}{ll}
        \Box_X \forall y \Phi \to\Phi  & \mbox{from (A3$'$) and (A0)  by syllogism }
        \\
        \Box_X(\Box_X \forall y \Phi \to\Phi ) & 
        \mbox{by necessitation rule (R3)}
        \\
        \Box_X\Box_X\forall y \Phi \to\Box_X\Phi & \mbox{by normality axiom (A4$'$)}
        \\
        \Box_X\forall y \Phi \to\Box_X\Box_X\forall y\Phi  & \mbox{axiom (A6$'$)}
        \\
        \Box_X\forall y \Phi \to \Box_X\Phi
         & \mbox{by syllogism}\\
        \end{array} 
        $$           
    \end{itemize} 
\end{remark}

\begin{definition}[Constant Specification]
\em
{\it Constant Specification} is any set of formulas of the form
$\term{c}{\varnothing} \Phi$
where $c \in JConst$, $\Phi_i$ is an $\folpcs{0}$-axiom.

By $\folpcs{CS}$ we denote logic obtained from $\folpcs{0}$ by adding formulas from $CS$ as new axioms.
\end{definition}

The definitions of derivation and derivation  from hypothesis  for $\folpcs{0}$ and $\folpcs{CS}$ are standard with standard restrictions. Namely, in order to have Deduction Theorem, for derivation from hypothesis we assume that generalization is not applied to variables free in hypothesis and that necessitation rule is applied only to axioms of $\folpcs{0}$. Since $\folpcs{0}$ contains transitivity axiom $\Box_X\Phi\to\Box_X\Box_X\Phi$ and derives 
$\term{t}{X}\Phi\to\Box_X\term{t}{X}\Phi$, this restriction does not change the set of derivable formulas.
We write
$\vdash\Phi$ 
if $\Phi$ is derivable in $\folpcs{0}$
and $CS\vdash\Phi$ or $\vdash_{CS}\Phi$ 
if $\Phi$ is derivable in $\folpcs{CS}$.  Here we are allowed to use all inference rules from Definition \ref{def_axioms}.
For derivability from hypotheses $\Gamma$ we use notation $\Gamma\vdash\Phi$ for $\folpcs{0}$  
and $\Gamma,CS\vdash\Phi$ or $\Gamma\vdash_{CS}\Phi$  for $\folpcs{CS}$, here we allow to apply generalization rule only to variables not free in $\Gamma$ and to apply necessitation rule only to axioms.


\begin{example} \em
One can derive the following formulas in $\folpcs{}$ 
with the appropriate constant specification.
\begin{itemize}
    \item $\term{t}{X} \Phi \leftrightarrow \Box_X \term{t}{X} \Phi$.
    \item $\term{t}{X}\Box_X \Phi \to \term{[a \cdot t]}{X} \Phi$ with $CS = \{ \term{a}{\varnothing} (\Box_X \Phi \to \Phi) \}$.
    \item $\term{t}{X} \Box_X \Phi \to \Box_X \term{[a \cdot t]}{X} \Phi$ with $CS = \{ \term{a}{\varnothing} (\Box_X \Phi \to \Phi) \}$.
    \item $\Box_X \term{t}{X} \Phi \to \term{[a \cdot !t]}{X} \Box_X  \Phi$ with  $CS = \{ \term{a}{\varnothing} (\term{t}{X} \Phi \to \Box_X \Phi) \}$.
    \item $\term{t}{X} \Phi \rightarrow \term{[a\cdot !t]}{X} \Box_X \Phi$ with $ CS = \{ \term{a}{\varnothing} (\term{t}{X} \Phi \to \Box_X \Phi) \}$.
    \item $\term{t}{X} \Phi \rightarrow \term{[b \cdot  (a\cdot !t)]}{X} \Box_X \Box_X \Phi$ with $CS = \{ \term{a}{\varnothing} (\term{t}{X} \Phi \to \Box_X \Phi), \term{b}{\varnothing} (\Box_X \Phi \to \Box_X  \Box_X \Phi) \}$.
\end{itemize}
\end{example}

\subsection{Internalization and Substitution in $\folpcs{}$}

Similarly with first order logic of proofs $FOLP$, our logic enjoys internalization property.

\begin{lemma}[Internalization] \em
\label{lm_int}
1. Assume that $p_1,\dots, p_n \in JVar$, $X_1, \dots, X_n$ are finite sets of individual variables and 
    $X = \bigcup\limits_{i=1}^{n} X_i$. 
    If 
    $\term{p_1}{X_1}\Phi_1, \dots, \term{p_n}{X_n}\Phi_n \vdash_{CS} \Phi, $
    then there exist a justification term $t(p_1, \dots,p_n)$ and a constant specification $CS'\supseteq CS$ s.t.
$$\term{p_1}{X_1}\Phi_1, \dots, \term{p_n}{X_n}\Phi_n \vdash_{CS'} \term{t(p_1, \dots,p_n)}{X}\Phi.$$

2. If $\Phi_1, \dots, \Phi_n \vdash_{CS} \Phi$ then there is a justification term $t(p_1, \dots p_n)$ and a constant specification $CS'\supseteq CS$ s.t. $\term{p_1}{X_1}\Phi_1, \dots, \term{p_n}{X_n}\Phi_n \vdash_{CS'} \term{t(p_1, \dots,p_n)}{X}\Phi$ for $X = \bigcup\limits_{i=1}^{n} X_i$.    

3. If $\vdash_{CS} \Phi$, then $\vdash_{CS'} \term{t}{\varnothing} \Phi$ for some justification term $t$ and constant specification $CS'\supseteq CS$. 
\end{lemma}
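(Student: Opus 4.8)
The plan is to establish the three claims in the order 1, 2, 3, with essentially all the work going into Part 1. For Part 1 I would argue by induction on the length of a derivation of $\Phi$ from the hypotheses $\Gamma=\{\term{p_1}{X_1}\Phi_1,\dots,\term{p_n}{X_n}\Phi_n\}$ in $\folpcs{CS}$, constructing the term $t$ and the finite extension $CS'\supseteq CS$ along the way: each time a justification of an axiom is needed a justification constant not used so far is chosen, so at the end $CS'$ is $CS$ together with finitely many such new axioms. In the base cases: if $\Phi$ is an $\folpcs{0}$-axiom, put $\term{c}{\varnothing}\Phi$ into $CS'$ for a fresh $c$ and raise the index from $\varnothing$ to $X$ by iterated use of (A2), taking $t=c$; if $\Phi$ is a $CS$-axiom $\term{c}{\varnothing}\Psi$, apply (A6) to it and again raise the index, taking $t={!c}$; if $\Phi$ is the hypothesis $\term{p_i}{X_i}\Phi_i$, apply (A6) and then (A2) to go from $X_i$ up to $X$, taking $t={!p_i}$. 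Since $FV(\term{t}{X}\Phi)=X$ regardless of $FV(\Phi)$, these index manipulations are always legitimate.

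For the induction step there are three rules. If $\Phi$ comes from $\Psi$ and $\Psi\to\Phi$ by (R1), the induction hypothesis yields terms $s,r$ with $\Gamma\vdash_{CS'}\term{s}{X}\Psi$ and $\Gamma\vdash_{CS'}\term{r}{X}(\Psi\to\Phi)$ (take the union of the two constant specifications), and (A4) gives $\Gamma\vdash_{CS'}\term{[r\cdot s]}{X}\Phi$, so $t=[r\cdot s]$. If $\Phi=\forall x\Psi$ comes by (R2), the restriction on derivations from hypotheses forces $x\notin FV(\Gamma)=\bigcup_i X_i=X$, so (A7) applies and turns the inductively given $\term{s}{X}\Psi$ into $\term{gen_x(s)}{X}\forall x\Psi$, so $t=gen_x(s)$. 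If $\Phi=\Box_\varnothing\Psi$ comes by (R3), then $\Psi$ is an $\folpcs{0}$-axiom, since in a derivation from hypotheses necessitation is applied only to axioms; hence by the axiom base case there is a fresh $c$ with $\term{c}{\varnothing}\Psi\in CS'$, and then, applying (A6) to get $\term{!c}{\varnothing}\term{c}{\varnothing}\Psi$, adding a fresh $a$ with $\term{a}{\varnothing}(\term{c}{\varnothing}\Psi\to\Box_\varnothing\Psi)\in CS'$ (an instance of (A8) under a constant), and using (A4), one obtains $\term{[a\cdot !c]}{\varnothing}\Box_\varnothing\Psi$; raising the index gives $t=[a\cdot !c]$. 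In each case $t$ is built only from $p_1,\dots,p_n$ and constants, as the notation $t(p_1,\dots,p_n)$ demands.

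Part 2 follows from Part 1: from $\term{p_i}{X_i}\Phi_i$ one derives $\Phi_i$ by (R1) using the instance $\term{p_i}{X_i}\Phi_i\to\Phi_i$ of (A3), so prepending these short derivations to the given derivation of $\Phi$ from $\Phi_1,\dots,\Phi_n$ produces a derivation of $\Phi$ from $\Gamma$; the prepended pieces use neither (R2) nor (R3), and since $FV(\term{p_i}{X_i}\Phi_i)=X_i\subseteq FV(\Phi_i)$ the original derivation still respects the generalization restriction relative to $\Gamma$, so Part 1 applies. Part 3 is the case $n=0$ (so $X=\varnothing$); the only wrinkle is that in a bare derivation of a theorem (R3) may be applied to non-axioms, but by the remark above that this restriction does not change the set of derivable formulas, $\Phi$ has a derivation in which (R3) is used only on axioms, and Part 1 with $n=0$ gives $\vdash_{CS'}\term{t}{\varnothing}\Phi$. (Equivalently, one reruns the induction allowing (R3) on any theorem $\Psi$, handled exactly as above with the inductively obtained $\term{s}{\varnothing}\Psi$ in place of $\term{c}{\varnothing}\Psi$, giving $t=[a\cdot !s]$.)

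The routine cases amount to the usual $FOLP$-style bookkeeping; the one genuinely hybrid point, which I expect to be the main obstacle, is the necessitation case, where one must pass from a justification of $\Psi$ to a justification of $\Box_\varnothing\Psi$ — possible precisely because of axiom (A8) $\term{t}{X}\Psi\to\Box_X\Psi$ — and this is exactly what forces the enlargement of $CS$ by a constant justifying the relevant instance of (A8). The other thing to watch is that in the generalization case the index $X$ must not contain the generalized variable, which is guaranteed by the standard restriction on derivations from hypotheses (and is automatic when there are no hypotheses).
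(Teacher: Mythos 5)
Your proof is correct and follows essentially the same route as the paper's: induction on the derivation, with the same term constructions in each case (fresh constants plus (A2) for axioms, $!$ plus (A2) for $CS$-formulas and hypotheses, (A4) for modus ponens, (A7) for generalization, and the $[b\cdot\,!a]$-style construction via an internalized instance of (A8) for necessitation), and the same reductions of Parts 2 and 3 to Part 1 via (A3) and the empty hypothesis set. The only quibble is your parenthetical claim $X_i\subseteq FV(\Phi_i)$ in Part 2, which need not hold, but the surrounding argument does not actually depend on it and the paper glosses over the same point.
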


\begin{proof}
Let us prove (1). 
Induction on derivation of $\Phi$ from $\term{p_1}{X_1}\Phi_1, \dots, \term{p_n}{X_n}\Phi_n$. Initially $CS'=CS$. 

If $\Phi$ is an axiom of $\folp$, 
add   $\term{c}{\varnothing}\Phi$ to $CS'$ for some $c\in JConst$. Hence, $\vdash_{CS'}\term{c}{\varnothing}\Phi$  and using (A2) 
				$\vdash_{CS'}\term{c}{X}\Psi$.
If $\Phi \in CS$,  then $\Phi$ has the form $\term{c}{\varnothing}\Psi$. Using axiom (A6), we derive $\term{!c}{\varnothing}\term{c}{\varnothing}\Psi$ and using (A2) 
				$\term{!c}{X}\term{c}{\varnothing}\Psi$. 
If $\Phi$ is one of the hypothesis $\term{p_i}{X_i}\Phi_i$, then $X_i\subseteq X$. Applying (A6) and then (A2) yields $\term{!p_i}{X}\term{p_i}{X_i}\Phi_i$.

If $\Phi$ is obtained by modus ponens (R1) then by the induction hypothesis, $\vdash_{CS'} \term{t}{X}(\Psi \to \Phi)$ and $\vdash_{CS'} \term{s}{X} \Psi$.
Applying axiom (A4) yields $\vdash_{CS'} \term{[t\cdot s]}{X} \Phi$. 

If $\Phi$ is obtained by the generalization rule (R2), then it is applied to a variable which is not in $X$. 
Thus $\Phi = \forall x \Psi$. By the induction hypothesis, $\vdash_{CS'} \term{t}{X}\Psi$. Using axiom (A7) we derive $\term{gen_x(t)}{X}\forall x \Psi$. 
				
If $\Phi$ is obtained by necessitation rule, applied to an axiom, then $\Phi$ is of the form $\Box_{\varnothing} \Phi$ where $\Phi$ is an axiom.  
Extend $CS'$ by $\term{a}{\varnothing}\Phi$ and  $(\term{b}{\varnothing}(\term{a}{\varnothing} \Phi \to \Box_\varnothing \Phi) $ for some $a,b\in JConst$. Then by (A6) and (A4) we obtain
$\term{[b\cdot !a]}{\varnothing} \Box_\varnothing \Phi$.

To prove (2), assume that $\Phi_1, \dots, \Phi_n \vdash_{CS} \Phi$. By axiom (A3) we obtain $\term{p_1}{X_1}\Phi_1, \dots, \term{p_n}{X_n}\Phi_n \vdash_{CS} \Phi$. Therefore by (1) of the current  Lemma 
$\term{p_1}{X_1}\Phi_1, \dots, \term{p_n}{X_n}\Phi_n \vdash_{CS'} \term{t(p_1, \dots,p_n)}{X}\Phi$ for $X = \bigcup\limits_{i=1}^{n} X_i$. 

Note that (3) follows from (1) or (2) if we take the set of hypothesis empty.
\end{proof}

\begin{definition}
\em
A constant specification $CS$ is called {\it axiomatically appropriate}, if  
for each formula $\Phi$ if it is an axiom of $\folp$  then there is a justification constant $c$ s.t. $\term{c}{\varnothing}\Phi$ belongs to $CS$.
\end{definition}

\begin{remark}
\em
If $\Phi$ is axiomatically appropriate then in Lemma \ref{lm_int} one can take 
$CS'=CS$.
\end{remark}

\begin{definition}
\em
A constant specification $CS$ is called {\it  variant closed}, if for every substitution $\sigma$ with $Dom(\sigma)= FV(\Phi)$ one has $\term{c}{\varnothing}\Phi\in CS\ \Leftrightarrow\ \term{c}{\varnothing}\Phi\sigma\in CS$.
\end{definition}

The following Lemma is standard for first order justification logic.

\begin{lemma}[Substitution] \em
\label{lm_subst}
Assume that $\Phi$ is a formula, $\Gamma$ is a set of formulas. 
Let $\sigma$ be a  substitution of variables from $FV(\Gamma,\Phi)$ such that no collision of variables occurs in $\Psi\sigma$ for $\Psi\in\Gamma$ or in $\Phi\sigma$.
If  $CS$ is a variant closed constant specification, 
    $\Phi_1, \dots, \Phi_n \vdash_{CS} \Phi$, 
    then $\Phi_1\sigma, \dots, \Phi_n\sigma \vdash_{CS} \Phi\sigma$.
In particular, if  $\Phi_1, \dots, \Phi_n \vdash \Phi$, 
    then $\Phi_1\sigma, \dots, \Phi_n\sigma \vdash \Phi\sigma$.
\end{lemma}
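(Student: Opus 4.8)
The plan is to argue by induction on the length of the $\folpcs{CS}$-derivation of $\Phi$ from $\Phi_1,\dots,\Phi_n$. The ``in particular'' clause is the special case $CS=\varnothing$ of the main statement (the empty constant specification being vacuously variant closed), so it suffices to prove the latter.

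A preliminary normalization makes the induction clean. Since any alphabetic variant of a formula is interderivable with it in $\folpcs{CS}$, since $CS$ is variant closed, and since the generalization rule in a derivation from hypotheses is applied only to variables not free in the hypotheses (so such a variable may be renamed to a fresh one, at the cost of turning one (R2) step into the benign fresh‑variable case of substitution followed by (R2)), the given derivation may be replaced by one in which every bound variable occurring anywhere and every variable used in an application of (R2) is distinct from the finitely many variables in $Dom(\sigma)\cup Range(\sigma)$. This replaces $\Gamma$ and $\Phi$ by alphabetic variants $\Gamma^{\circ}$, $\Phi^{\circ}$, with $\Gamma^{\circ}\sigma$ and $\Phi^{\circ}\sigma$ again alphabetic variants of $\Gamma\sigma$ and $\Phi\sigma$; so it is enough to prove $\Gamma^{\circ}\sigma\vdash_{CS}\Phi^{\circ}\sigma$. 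After this normalization the substitution $\sigma$ commutes with every formula constructor --- $\neg$, $\wedge$, $\forall z$, $\Box_X$, $\term{t}{X}$ --- at every node of the derivation, and in particular no variable capture (and no unintended freeing of a bound variable through an index) occurs anywhere.

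The induction is then routine. If $\Phi$ is one of the hypotheses, $\Phi\sigma$ is among $\Phi_1\sigma,\dots,\Phi_n\sigma$. If $\Phi$ is an instance of a schema (A0)--(A8) or (A1$'$)--(A7$'$), then $\Phi\sigma$ is an instance of the same schema: for the quantifier axioms in (A0) this is the classical fact; for (A1), (A1$'$) one uses that $y\notin FV(\Psi)$ gives $\sigma(y)\notin FV(\Psi\sigma)$; for (A7), (A7$'$) that the bound variable stays outside the renamed index; the other schemata are stable under the uniform renaming of $X$ and of the free variables of the constituents. If $\Phi\in CS$, then $\Phi=\term{c}{\varnothing}\Psi$ with $\Psi$ an $\folpcs{0}$-axiom and $FV(\Phi)=\varnothing$, so $\Phi\sigma=\Phi\in CS$. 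If $\Phi$ comes by (R1) from $\chi\to\Phi$ and $\chi$, apply the induction hypothesis to the two subderivations and then (R1). If $\Phi=\forall z\chi$ comes by (R2), then after normalization $z\notin Dom(\sigma)\cup Range(\sigma)$ and $z\notin FV(\Gamma)$, so $\Phi\sigma=\forall z(\chi\sigma)$, $z\notin FV(\Gamma\sigma)$, and the induction hypothesis plus (R2) give $\Gamma\sigma\vdash_{CS}\Phi\sigma$. If $\Phi=\Box_{\varnothing}\chi$ comes by (R3) from an $\folpcs{0}$-axiom $\chi$, then $FV(\Phi)=\varnothing$, $\Phi\sigma=\Phi$, and $\Phi$ is re‑derived by (R3).

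The step I expect to be the main obstacle is the preliminary normalization together with the reconciliation at the end: one has to make precise what ``no collision in $\Phi\sigma$'' means in a language in which substituting into the index of $\Box_X$ or $\term{t}{X}$ can turn a bound occurrence free as well as capture a variable, and one has to check carefully that a $\folpcs{CS}$-derivation really can be alphabetically renamed (bound variables and generalization variables alike) into one avoiding $Dom(\sigma)\cup Range(\sigma)$ while remaining a legal derivation from alphabetic variants of the same hypotheses --- the (R2) steps and the members of $CS$ being where one must invoke, respectively, the fresh‑variable case of substitution and variant closedness --- and that interderivability of alphabetic variants then transports $\Gamma^{\circ}\sigma\vdash_{CS}\Phi^{\circ}\sigma$ back to $\Gamma\sigma\vdash_{CS}\Phi\sigma$. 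Granting this, all the individual rule and axiom cases are mechanical.
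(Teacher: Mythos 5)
The paper itself gives no proof of this lemma (it is asserted as ``standard''), so your argument stands or falls on its own, and I think its central step fails. Your normalization rests on the claim that ``any alphabetic variant of a formula is interderivable with it in $\folpcs{CS}$''. That is true for variables bound by quantifiers and by $\Box_X$ (e.g.\ $\Box_\varnothing P(x)\to\Box_\varnothing P(z)$ follows from (A7$'$), necessitation of $\forall x P(x)\to P(z)$ and (A4$'$)), but it is false for variables bound by the justification operator with a \emph{fixed} term: no axiom of $\folpcs{CS}$ changes the asserted formula while keeping the term, so $\term{p}{\varnothing}P(x)\to\term{p}{\varnothing}P(z)$ (same justification variable $p$) is not derivable. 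Semantically one can build a Fitting model with $w\in\cE(p,P(x),\varnothing)$ but $w\notin\cE(p,P(z),\varnothing)$: the substitution condition only identifies evidence for formulas whose renamed free variables are actually evaluated by the valuation, so it does not force these two sets to coincide, and soundness then refutes the implication. Justification assertions are hyperintensional, and renaming a variable that is bound by $\term{t}{X}$ produces a formula the logic does not identify with the original.

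This breaks your proposal at exactly the point you flagged as delicate. The variables in $Dom(\sigma)\cup Range(\sigma)$ lie in $FV(\Gamma,\Phi)$ and will typically also occur \emph{bound under justification operators} inside $\Gamma$ or $\Phi$ (e.g.\ $\Gamma=\{\term{p}{\varnothing}P(x)\}$, $\Phi=P(x)$, $\sigma\colon x\mapsto y$, which satisfies the lemma's hypotheses). Your normalization renames such bound occurrences, replacing $\Gamma,\Phi$ by variants $\Gamma^{\circ},\Phi^{\circ}$, and the final reconciliation needs $\Gamma\sigma\vdash_{CS}\Gamma^{\circ}\sigma$ and $\Phi^{\circ}\sigma\vdash_{CS}\Phi\sigma$; in the example this is precisely the underivable transfer $\term{p}{\varnothing}P(x)\vdash\term{p}{\varnothing}P(z)$. (Variant closedness of $CS$ rescues only the lines that are themselves members of $CS$, since both $\term{c}{\varnothing}\Psi$ and $\term{c}{\varnothing}\Psi'$ are then axioms; it gives nothing for justification subformulas of hypotheses or of derived lines.) A correct proof has to apply $\sigma$ to the given derivation directly, renaming only quantifier eigenvariables of (R2), and must \emph{re-derive} rather than re-instantiate the $\sigma$-images of axiom lines in which a substituted variable also occurs bound --- e.g.\ $(\term{t}{\varnothing}\Psi\to\Psi)\sigma$ is obtained from (A3) by the Bernays rule and a quantifier axiom, not as an instance of (A3) --- and it must argue that the problematic index collisions (such as turning an (A1) instance into the invalid $\term{t}{X\cup\{v\}}P(x,v)\to\term{t}{X}P(x,v)$) never need to arise. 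None of this is supplied by the generic alphabetic-renaming preprocessing you import from classical first-order logic, so the proof as written has a genuine gap.
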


\begin{remark}
\em
Internalization Lemma for a justification logic means that its own derivations can be represented by justification terms. This Lemma plays an important role in the proofs of Realization theorem, which says that for a modal logic $L$ and a justification logic $JL$ a formula $\Phi$ is a theorem of $L$ if and only if its realization $\Phi^r$ obtained by replacing each occurrence of $\Box$ by a justification term  is a theorem of $JL$. The brief list of the known results on realization is given in introduction of this paper. 

Realization for hybrid logics is a bit more delicate. There are two reasonable questions to ask about $\folp{}$. 
The first one is whether $\folp{}$ can be realized in $FOLP$, that is, if for a  theorem $\folpcs{CS}$ there exists a replacing of all occurrences of $\Box$ by justification terms, which transforms it into a theorem of 
$FOLP$. The second question is about connections of  $FOS4$ and $\folp$. If a formula is a theorem of $FOS4$ which occurrences of $\Box$ in it can be replaced by justification terms in such a way that the result is a theorem of $\folpcs{CS}$ with some $CS$.

For some propositional hybrid logics these questions were addressed, for example, 
in \cite{k2010},  \cite{goris2007},  \cite{goris2009} and \cite{g2012}. 
For logic $\folp{}$ realization is out of the scope of this paper and  requires further investigation.
\end{remark}

\section{Semantics}
\label{s_model}

\subsection{Definition of Fitting Models}

Fitting models for $\folp$ are Kripke models for first--order $S4$ supplied with evidence function for relation between justification terms and formulas. Note that formula $\Box\forall x \Phi\to\forall x \Box_x\Phi$ is derivable in $\folpcs{0}$, its validity in Kriple models corresponds to growing domains, so our models are based on transitive reflexive frames with growing domains. 
We need some definitions and abbreviations concerning assignment of objects to variables. 

\begin{definition}
\em

Given a set $D\not=\varnothing$ and a finite or countable alphabet $X$, we call {\it a valuation of $X$ in $D$} any  function $f$ from  $X$ to $D$. As  usually, we denote $X$ by $Dom(f)$ and the $f(X)$ by $Im(f)$. A valuation is called {\it finite} if its domain is finite.
It is convenient to admit 
the empty set as the only possible valuation for the empty $X$.

For valuations $f$ and $g$ with $Dom(f)=Y$, $Dom(g)=X$ if $g\subseteq f$ then 
we say that $g$ is
a {\it restriction of $f$ to $X$} and $f$ is an {\it extension of $g$ to $Y$}. 
\end{definition}

\begin{notation}
\begin{itemize}
    \item
		Let $f$ be a valuation with $Dom(f)\subseteq X$ for some finite or countable alphabet $X$. 
For arbitrary  $Y\subseteq X$ by $f\upharpoonright Y$
we denote restriction of $f$ to $Y\cap Dom(f)$. 
\item
For a finite valuation $f$ of a subset of $X$ in $D$ by $ext(f,D)$ we denote the set of all finite extensions $g$ of $f$ such that 
$Im(g) \subseteq D$.
\item
For a valuation $f$ of $X$ in $D$, any different  variables $x_1,\ldots,x_n\in X$ and arbitrary $d_1,\ldots,d_n\in D$ by $f^{x_1, \dots, x_n}_{d_1,\ldots, d_n}$ we denote  the valuation with the domain $X$ defined as follows:
 	    $$
f^{x_1, \dots, x_n}_{d_1, \dots, d_n} (y)=\begin{cases}
			d_i, & \text{if $y = x_i$}\\
            f(y), & \text{otherwise.}
		 \end{cases}
$$
\item
For a substitution $\sigma$ of variables $Y$ for variables $X$ and a valuation $g$ of $Y$  by $g\circ \sigma$ we denote their composition, that is, a valuation $f$ of $X$  such that 
for each $x\in X$ it holds that $f(x)=g(\sigma(x))$.
\end{itemize}
\end{notation}

\begin{definition}[Fitting model] 
\em
\label{fitting_model}
A Fitting Model for $\folpcs{0}$ is a tuple 
 	$$\mathcal{M} = (W, R, (D_w)_{w \in W}, I, \mathcal{E}),$$
 	where 
       \begin{itemize}
 		\item $(W,R)$ is an $S4$-frame, that is, $W \not = \varnothing$ is a set of possible world and $R \subseteq W \times W$ is a reflexive and transitive accessibility relation on $W$;
 		\item $\{D_w \not = \varnothing\ |\ \ w\in W\}$ is a family of domain sets. Abbreviation $D$ is used for $\bigcup\limits_{w \in W}D_w$.  We consider models with monotonic domains, that is,  $wRu$ implies  $D_w \subseteq D_{u}$;
   
         \item  $I$ is an interpretation function, that is,  for each $n$-place predicate symbol $P$ and $w \in W$  we have $I(P, w) \subseteq (D_w)^n$;

     \item 
$\mathcal{E}$ is an evidence function. For any justification term $t$, formula $\Phi$ and finite valuation $f$ of  individual variables  in $D=\bigcup\limits_{w\in W}D_w$,   
     $\mathcal{E}(t, \Phi, f)$ is a subset of $W$.  
			
 	\end{itemize}
 \noindent 

We require that the evidence function $\mathcal{E}$ satisfies the following  conditions:

 \begin{itemize}
\item adequacy condition: 

$w\in\cE(t,\Phi,f)$ implies $Im (f)\subseteq D_w$;
 
 \item substitution condition: 
 
 assume that $x_1,\ldots,x_n$ are distinct variables from $FV(\Phi)$, $y_1,\ldots,y_n$ are variables and $\sigma$ is a substitution that replaces $x_i$ by $y_i$, that is 
 $
  \sigma=\left(\begin{array}{cccc}
  x_1& x_2 & \ldots & x_n\\
  y_1& y_2 & \ldots & y_n\\
  \end{array}\right)
  $
  it holds that 
  $ \mathcal{E}(t,\Phi\sigma,f\upharpoonright FV(\Phi\sigma)) = \mathcal{E}(t,\Phi,f\circ \sigma )$;

\item $R$ closure condition: 

if $wRu$ and $w\in\mathcal{E}(t,\Phi,f)$ then $u\in\mathcal{E}(t,\Phi,f)$;

\item 
extension condition: 

for $w\in W$ and $g\in ext(f,D_w)$ if
$w\in\mathcal{E}(t,\Phi,f)$ then $w\in\mathcal{E}(t,\Phi,g)$;
\item 
restriction condition: 

$\mathcal{E}(t,\Phi,f) \subseteq\mathcal{E}(t,\Phi,f\upharpoonright FV(\Phi))$;  

\item 
$\cdot$ condition:

$\mathcal{E}(t, \Phi \to \Psi, f) \cap \mathcal{E}(s,\Phi, f)  \subseteq \mathcal{E}(t \cdot s, \Psi,f)$; 

\item + condition: 

$\mathcal{E}(t,\Phi,f) \cup \mathcal{E}(s, \Phi, f) \subseteq \mathcal{E}(t + s,\Phi,f)$;

\item ! condition: 

if $w\in\mathcal{E}(t, \Phi, f)$, then  $w\in\mathcal{E}(!t, \term{t}{X}\Phi, g)$ for $g\in ext(f, D_w)$ and $X$ such that $Dom(f)\cap FV(\Phi)\subseteq X\subseteq Dom(g)$;

\item $gen_x$ condition: 

$\mathcal{E}(t,\Phi,f) \subseteq \mathcal{E}(gen_x(t),\forall x\Phi, f)$
for $x\not\in Dom(f)\cap FV(\Phi)$.
 \end{itemize}
 \end{definition}

\begin{remark}
\em
\label{rm_e_inf}
Informally speaking, we think of a valuation $f$ as replacing individual variables by objects from $D$. Applying  $f$  to a formula $\Phi$ means that each variable  $x\in FV(\Phi)\cap Dom(f)$ is assigned the value $f(x)\in D$. One can consider $\Phi$ under valuation $f$ as a substitutional instance of $\Phi$. Note that not necessarily all free variables of $\Phi$ are assigned values, some of them remain parameters of a formula. Now, by $\mathcal{E}(t, \Phi, f)$ we informally mean the collection of possible worlds,
in which $t$ witnesses the substitutional instances of $\Phi$ obtained by applying $f$ to $\Phi$. 

We assume that at a world $w$ a term cannot witness an assertion about objects that are not from $D_w$, this leads us to the adequacy condition on $\mathcal{E}$. However, a term can witness an assertion with a free variable (that is, with a parameter), this implies that the same term witnesses all substitutional instances of this assertion. It is reflected in extension condition.

Then, if a formula $\Psi$ is obtained from a formula $\Phi$ by renaming of free variables $\sigma$, then substitution $f$ applied to $\Psi$ and  $f\circ\sigma$ applied to $\Phi$ result in the same assertion, therefore any term should witness them simultaneously. This observation gives us substitution condition.
\end{remark}

\begin{definition}
\em 
    We say that a Fitting model  $\mathcal{M}$ and its  evidence function $\mathcal{E}$ {\it meet constant specification $CS$} if $\term{c}{\varnothing} \Phi \in CS$ implies
$ \mathcal{E}(c, \Phi,\varnothing)=W$.
\end{definition} 

Now we are going to  define the truth relation {\it ``a formula $\Phi$ is true at the world $w$ of the model $\mathcal{M}$ under the valuation  $\nu$''}, the comments on its specific details are given in the remark following the definition. 

 \begin{definition}
 \em
     {\it A valuation $\nu$ for a model $\mathcal{M}$} is a valuation of the set of variables $Var$ in  the domain $D$ of $\mathcal{M}$.
Given a model $\mathcal{M}$ and a valuation $\nu$ for $\mathcal{M}$, we define the truth relation {\it ``$\Phi$ is true at the world $w$ of the model $\mathcal{M}$ under the valuation  $\nu$''}, denoted by  
$(\mathcal{M}, \nu), w \Vdash \Phi$,  by induction on $\Phi$.

\begin{itemize}
	\item $(\mathcal{M}, \nu), w \Vdash P(x_1, \dots, x_n) \Leftrightarrow \langle \nu(x_1), \dots, \nu(x_n)\rangle \in I(P, w)$  
	\item $(\mathcal{M}, \nu), w \Vdash \neg \Psi \Leftrightarrow \nu(FV(\Psi))\subseteq D_w \mbox{ and }(\mathcal{M}, \nu), w \not \Vdash \Psi$
	\item $(\mathcal{M}, \nu), w \Vdash \Psi \wedge \Theta \Leftrightarrow (\mathcal{M}, \nu), w \Vdash \Psi $ and $(\mathcal{M}, \nu), w \Vdash \Theta$
	\item $(\mathcal{M}, \nu), w \Vdash \forall x \Psi  \Leftrightarrow \forall a \in D_w((\mathcal{M}, \nu^x_a), w \Vdash \Psi)$ 
	\item 
	By $R(w)$ we denote the set $\{u\in W\ |\ wRu\}$ of all possible worlds accessible from $w$. Then
	
	$(\mathcal{M}, \nu), w \Vdash \Box_X \Psi \Leftrightarrow$ 
 \begin{enumerate}
 \item 
 $\nu(X)\subseteq D_w$ and 
     \item 
 $\forall u \in R(w)\ \forall d_1, \dots, d_n  \in D_{u}\ ((\mathcal{M},  \nu^{y_1, \dots, y_n}_{d_1, \dots, d_n}), u \Vdash \Psi),$ where $\{y_1, \dots, y_n\} = FV(\Psi) \setminus X$
 \end{enumerate}
	\item $(\mathcal{M}, \nu), w \Vdash \term{t}{X} \Psi  \Leftrightarrow$  
	\begin{enumerate}
 \item 
 $\nu(X)\subseteq D_w$ and 
		\item 
  $\forall u \in R(w) \ \forall d_1, \dots, d_n \in D_{u} \ ((\mathcal{M}, \nu^{y_1, \ldots, y_n}_{d_1, \dots, d_n}),  u\Vdash \Psi),$
  where $\{y_1, \dots, y_n\} = FV(\Psi) \setminus X$  
  \item  
  $w\in
  \mathcal{E}(t, \Psi, \nu\upharpoonright (FV(\Psi) \cap X))$
\end{enumerate}
We will omit writing a pair $(\mathcal{M}, \nu)$ when it is clear from context. 
\end{itemize} 
\end{definition}

\begin{remark} 
\label{rm_neg_inf}
\em
\begin{itemize}
\item
The standard property of Kripke models for first order modal logic is that each formula $\Phi$, true in a given world $w$, lives in $w$, that is, $\nu(x)\in D_w$ for each $x\in FV(\Phi)$. This also holds for our definition (proposition 1 of Lemma \ref{lm_models}). 

Note that in the truth condition for atomic formulas $\nu(x_i)$ for $i=1,\ldots,n$ belong to the domain $D=\bigcup\limits_{w\in W}D_w$. However,
$(\mathcal{M}, \nu), w \Vdash P(x_1, \dots, x_n)$ is equivalent to  
$\langle \nu(x_1), \dots, \nu(x_n)\rangle \in I(P, w)$ and $I(P, w)\subseteq D_w^n$, whence  $(\mathcal{M}, \nu), w \Vdash P(x_1, \dots, x_n)$ implies $\nu(x_i)\in D_w$ for $i=1,\ldots,n$. 

In case of negation we guarantee this proposition by adding the requirement for
$(\mathcal{M}, \nu), w \Vdash \neg \Psi$ to 
$(\mathcal{M}, \nu), w \not \Vdash \Psi$ for
$(\mathcal{M}, \nu), w \Vdash \neg\Psi$.
Without this additional requirement we may get $(\mathcal{M}, \nu), w \Vdash \neg \Phi$ because of the fact that $\nu(x)\not\in D_w$ for some $x\in FV(\Phi)$, which can lead to contradiction. Namely, consider a model $cM$, a valuation $\nu$ and $w\in W$, such that $\nu(x)\not\in D_w$ and $\cI(P,w)=D_w$ for a unary predicate letter $P$. Then $(\mathcal{M}, \nu), w \Vdash \neg \Psi$ and  $(\mathcal{M}, \nu), w \Vdash \neg P(x)$, contradiction. 
\item
Note that  in any model $\mathcal{M}$ for the valuation $f=\varnothing$ we also define $\mathcal{E}(t,\Phi,\varnothing)\subseteq W$.
 If $w\not\in\mathcal{E}(t,\Phi,\varnothing)$, then $w\not\Vdash \term{t}{\varnothing}\Phi$ for any $\nu$. In case $w\in\mathcal{E}(t,\Phi,\varnothing)$ for any $\nu$  we have $(\mathcal{M}, \nu),w\Vdash \term{t}{\varnothing}\Phi$  if and only if $(\mathcal{M}, \nu), w\Vdash \Box_\varnothing \Phi$.
\end{itemize}
\end{remark}

\begin{remark}
\label{rm_dif_f2011}
\em
Our definition of a model is similar to the definition from \cite{f2011}. Let us describe the main differences. 

Firstly, specification of an evidence function is different. 
In \cite{f2011} it has four arguments, namely,  a justification term, a formula, an infinite (defined on the set $Var$ of all individual variables) valuation and a finite set of variables $X$. In in our models infinite valuations are replaced by finite and the set of variables $X$ is removed from the list of arguments. The reason for our choice is to have finite objects as the arguments of evidence function and reduce the number of its parameters.

Secondly, we impose substitution conditions on evidence function that is absent in \cite{f2011}. Its role  is explained in Remark \ref{rm_e_inf}. 
\end{remark}

\begin{definition}
\label{def_validity}
\em
For a given model $\mathcal{M}$,  a valuation $\nu$ and a formula $\Phi$ we say that {\it $\Phi$ is true in $(\mathcal{M},\nu)$} and denote this as $(\mathcal{M},\nu)\models\Phi$ if for all possible worlds $w\in W$ from
$\nu(FV(\Phi))\subseteq D_w$ it follows that $(\mathcal{M},\nu),w\models\Phi$. 
We say that {\it $\Phi$ is true in $\mathcal{M}$} and denote this as $\mathcal{M}\models\Phi$ if for all valuations $\nu$ one has $(\mathcal{M},\nu)\models\Phi$. 
For a set of formulas $\Gamma$ we define $(\mathcal{M},\nu)\models\Gamma$
and $\mathcal{M}\models\Gamma$ as $(\mathcal{M},\nu)\models\Phi$ for all $\Phi\in\Gamma$ and $\mathcal{M}\models\Phi$ for all $\Phi\in\Gamma$ respectively.
\end{definition}

\subsection{Simple Properties and Examples}

\begin{lemma}\label{lm_models}\em
For each model $\mathcal{M}$, valuation $\nu$, possible world $w\in W$ and formula $\Phi$
\begin{enumerate}
\item
$(\mathcal{M},  \nu), w \Vdash \Phi$ implies $\nu(FV(\Phi))\subseteq D_w$,
\item 
for every $a\in D_w$ and variables $x\in FV(\Phi)$ and $y$  if $\nu(y)=a$ then
$$
(\cM,\nu),w\Vdash \Phi [x/y] \Leftrightarrow
(\cM,\nu^x_a),w \Vdash \Phi.
$$
\item 
     for every $a \in D_w$ and $y \not \in FV(\Phi)$, 
    $$(\mathcal{M}, \nu), w \Vdash \Phi \Leftrightarrow  (\mathcal{M}, \nu^y_a), w \Vdash \Phi. $$
\end{enumerate}
\end{lemma}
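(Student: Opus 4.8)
The plan is to prove all three statements by a simultaneous induction on the structure of the formula $\Phi$. Statement (1) is the standard "truth implies liveness" property; statements (2) and (3) are the coincidence/substitution lemmas, with (3) being the special case of (2) when $x \notin FV(\Phi)$ (so that $\Phi[x/y] = \Phi$ and the left side becomes $(\mathcal{M},\nu),w\Vdash\Phi$, while the right side is $(\mathcal{M},\nu^x_a),w\Vdash\Phi$, but here $\nu^x_a$ differs from $\nu$ only at $x\notin FV(\Phi)$). Actually it is cleaner to observe that (3) follows from (1) together with a direct sub-induction, or to fold (3) into the same induction; I would prove (1) first in full, then (2), and derive (3) as a corollary or handle it in parallel.

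First I would establish (1). The atomic case is exactly the observation already made in Remark \ref{rm_neg_inf}: $\langle\nu(x_1),\dots,\nu(x_n)\rangle\in I(P,w)\subseteq D_w^n$. The negation case is immediate because the truth clause for $\neg\Psi$ explicitly demands $\nu(FV(\Psi))\subseteq D_w$, and $FV(\neg\Psi)=FV(\Psi)$. For conjunction, $FV(\Psi\wedge\Theta)=FV(\Psi)\cup FV(\Theta)$ and we apply the induction hypothesis to each conjunct. For $\forall x\Psi$: pick any $a\in D_w$ (nonempty), then $(\mathcal{M},\nu^x_a),w\Vdash\Psi$, so by IH $\nu^x_a(FV(\Psi))\subseteq D_w$; since $FV(\forall x\Psi)=FV(\Psi)\setminus\{x\}$ and $\nu^x_a$ agrees with $\nu$ off $x$, we get $\nu(FV(\forall x\Psi))\subseteq D_w$. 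For $\Box_X\Psi$ and $\term{t}{X}\Psi$ the truth clauses explicitly include the conjunct $\nu(X)\subseteq D_w$, and $FV(\Box_X\Psi)=FV(\term{t}{X}\Psi)=X$, so (1) is immediate for these cases.

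Next, for (2), the key technical point is the interaction of the substitution $[x/y]$ with the valuation modification $\nu^x_a$. In the atomic case $P(z_1,\dots,z_n)$, substituting $y$ for $x$ replaces each $z_i=x$ by $y$; since $\nu(y)=a$, the tuple $\langle\nu(z_1[x/y]),\dots\rangle$ equals $\langle\nu^x_a(z_1),\dots\rangle$, so the two sides agree. Boolean cases are routine. The quantifier case $\forall z\Psi$ needs the usual care: if $z=x$ then $\Phi[x/y]=\Phi$ and $x\notin FV(\forall x\Psi)$, so this reduces to statement (3); if $z\ne x$, the side condition on substitution (that $x$ is not in the scope of a quantifier on $y$, so in particular $z\ne y$) lets us commute $\nu^z_b$ past $\nu^x_a$ and apply the IH. The modal and justification cases $\Box_X\Psi$, $\term{t}{X}\Psi$ are where I expect the main obstacle: one must check that $FV(\Psi\sigma)\setminus(X\sigma)$ matches up with the enumerated variables in the two truth clauses, and — crucially for the $\term{t}{X}$ case — that the evidence-function clause is preserved. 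This is exactly what the \emph{substitution condition} on $\mathcal{E}$ was designed for: with $\sigma$ the renaming $x\mapsto y$, one has $\mathcal{E}(t,\Psi\sigma,f\upharpoonright FV(\Psi\sigma))=\mathcal{E}(t,\Psi,f\circ\sigma)$, and by choosing $f=\nu^x_a\upharpoonright(\dots)$ appropriately one matches $\nu\upharpoonright(FV(\Psi[x/y])\cap X[x/y])$ on one side with $\nu^x_a\upharpoonright(FV(\Psi)\cap X)$ on the other. I would spell out this bookkeeping carefully, treating separately whether $x\in X$, $y\in X$, neither, etc. Statement (3) then follows either as the degenerate sub-case of (2) or by an even simpler direct induction (no $\mathcal{E}$ subtlety arises since $x\notin FV(\Phi)$ means $x$ never appears in the relevant restricted valuations), using the \emph{restriction condition} on $\mathcal{E}$ to discard the irrelevant variable. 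The main obstacle throughout is the $\term{t}{X}\Psi$ case, and the resolution is a disciplined application of the substitution and restriction conditions on the evidence function.
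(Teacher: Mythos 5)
Your proposal is correct and follows exactly the route the paper takes — the paper's entire proof is the single line ``By induction on $\Phi$'', and your simultaneous induction, with the substitution and restriction conditions on $\mathcal{E}$ doing the work in the $\term{t}{X}\Psi$ case, is the intended argument spelled out in more detail. The only small slip is the opening claim that (3) is a special case of (2) (statement (2) requires $x\in FV(\Phi)$, so it does not literally subsume (3)), but you correct this yourself by handling (3) by a direct induction.
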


\begin{proof}
By induction on $\Phi$.
\end{proof}

The following lemma shows that for a reasonable list of conditions  we are able to construct an evidence function satisfying them.
\begin{definition}
\em
Assume that $(W,R,(D_w)_{w\in W})$ is an $S4$--frame with monotonic domains. Let $\cE_0$ a finite mapping which assigns a set of possible worlds to some triples $(p,\Phi,f)$ where $p\in JVar$, $\Phi$ is a formula of $\folp$ and $f$ is a finite valuation of $Var$ in $D$. We use the following notation
$$
\begin{array}{l@{\mbox{ is }}l}
JVar(\cE_0) & \{p\ |\ (p,\Phi,f)\in Dom(\cE_0) \mbox{ for some }
 \Phi, f \}\\
Fm(\cE_0) & \{\Phi\ |\ (p,\Phi,f)\in Dom(\cE_0) \mbox{ for some }
p,\  f \}\\
Var(\cE_0) & \{x\in Var\ |\ x\in FV(\Phi) \mbox{ for some } \Phi\in Fm(\cE_0)\}\\
\end{array}
$$
Such mapping $\cE_0$ is a  {\it basic evidence function} if 
\begin{enumerate}
\item
$(p,\Phi,f)\in Dom(\cE_0)$ and $g\subseteq f$ imply
$(p,\Phi,g)\in Dom(\cE_0)$,
\item
$\cE_0$ satisfy adequacy, restriction, extension, $R$--closure and substitution  conditions for evidence function on its domain.
\end{enumerate}
\end{definition}
\begin{lemma}
\label{lm_ev}
\em
Assume that $(W,R,(D_w)_{w\in W})$ is an $S4$--frame with a monotonic family of domain sets and $\cE_0$ is a basic evidence function.  
Then there exists an evidence function $\cE$ which is an extension of $\cE_0$.
\end{lemma}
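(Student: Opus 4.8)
The plan is to build $\cE$ in stages along the structure of justification terms, at each stage passing to a least closure, and then to verify that the resulting total function is an evidence function in the sense of Definition \ref{fitting_model} that restricts to $\cE_0$ on $Dom(\cE_0)$. The starting observation is that every requirement on an evidence function \emph{except adequacy} is a monotone closure condition: $R$-closure, the extension and restriction conditions, the $\cdot$, $+$, $!$ and $gen_x$ conditions, and each of the two inclusions comprising the substitution condition, only ever force worlds \emph{into} some $\cE(t,\Phi,f)$; adequacy, by contrast, is merely the upper bound $\cE(t,\Phi,f)\subseteq\{w : Im(f)\subseteq D_w\}$. Moreover each closure condition preserves adequacy: for $R$-closure this is monotonicity of domains, for the extension condition and the $!$ condition it is $g\in ext(f,D_w)\Rightarrow Im(g)\subseteq D_w$, for the restriction, $\cdot$, $+$ and $gen_x$ conditions the valuation is unchanged or shrinks, and for the substitution condition one checks that $Im(f\circ\sigma)=Im(f\!\upharpoonright\! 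FV(\Phi\sigma))$. Hence the least fixed point of the closure operator built on top of an adequate ``seed'' is again adequate (the class of adequate functions above the seed is closed both under one rule application and under unions of chains).

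Now define $\cE$ by recursion on the complexity of the term. The recursion is legitimate because in every closure condition mentioning a compound term the ``input'' triples carry strictly simpler terms: $\cE$ at $t$ and $s$ feeds $t\cdot s$ and $t+s$, while $\cE$ at $t$ feeds $!t$ and $gen_x(t)$ (the formula grows to $\term{t}{X}\Phi$ or $\forall x\Phi$ there, which is irrelevant since complexity is counted on the term). For the base case, a justification variable $p$, the triples with first coordinate $p$ form a subsystem that is closed under exactly the four structural conditions ($R$-closure, extension, restriction, substitution) and untouched by the remaining conditions, so we let $\cE$ on those triples be the least function extending the restriction of $\cE_0$ to triples with first coordinate $p$ and closed under those four conditions. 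For the recursion step one forms, for $t\cdot s$ (respectively $t+s$, $!t$, $gen_x(t)$), the seed dictated by the $\cdot$ (respectively $+$, $!$, $gen_x$) condition from the already-constructed values of $\cE$ on the proper subterms, and then takes the least function extending that seed and closed under the four structural conditions. Since processing one term never alters $\cE$ at any other term, these pieces assemble into a single total $\cE$. Adequacy of $\cE$ now follows by the previous paragraph and induction on term complexity (the base seed is adequate by hypothesis, and each compound seed is adequate because its witnessing worlds already satisfy the relevant inclusion $Im(\cdot)\subseteq D_w$); the four structural conditions hold everywhere by construction; and the $\cdot$, $+$, $!$ and $gen_x$ conditions hold because at the stage where a term is processed its seed already contains the required worlds and no later stage changes $\cE$ on the subterms or on that term. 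Thus $\cE$ is an evidence function.

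What remains, and is the real content of the proof, is the equality $\cE\!\upharpoonright\! Dom(\cE_0)=\cE_0$; since $Dom(\cE_0)$ consists only of triples with a justification variable in the first coordinate, this reduces to showing that forming the least closure at the base stage adds no world to a triple already in $Dom(\cE_0)$. I would prove this by induction on the stages of that closure, carrying the invariant that whenever a world $w$ has been placed into a triple $(p,\Phi,f)$ then $w$ already belongs to $\cE_0(p,\Psi,g)$ for \emph{every} domain triple $(p,\Psi,g)$ obtainable from $(p,\Phi,f)$ by the structural rules --- i.e. for every renaming $\Psi$ of $\Phi$ together with a suitably related valuation $g$ with $Im(g)\subseteq D_w$ such that $(p,\Psi,g)\in Dom(\cE_0)$; taking $(p,\Psi,g)=(p,\Phi,f)$ then gives the desired conclusion. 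The two defining properties of a basic evidence function are exactly what sustains this invariant: closure of $Dom(\cE_0)$ under restriction of valuations lets one stay inside the domain when the invariant calls for $\cE_0$ at a smaller valuation, and the assumed validity of the four structural conditions on $Dom(\cE_0)$, together with the fact that variable renamings compose, lets one transport a witnessing $\cE_0$-world across each application of extension, restriction or substitution. I expect this bookkeeping --- in particular the interplay of the substitution rule with extension/restriction round-trips --- to be the main obstacle; everything else is routine.
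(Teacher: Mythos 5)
Your proof is correct and follows the same overall strategy as the paper's: define $\cE$ by recursion on the structure of the justification term, with the base case absorbing $\cE_0$ and the inductive cases driven by the $\cdot$, $+$, $!$ and $gen_x$ conditions. The difference is in execution. The paper writes down explicit closed-form expressions — for a justification variable $p$ off $Dom(\cE_0)$, a union of $\cE_0$-values over smaller valuations and over renaming preimages, and for compound terms the formulas E1--E4 — and then asserts that the result satisfies all the conditions by induction on $t$. You instead take, at each stage, the least closure under the monotone conditions, and your two organizing observations (that every requirement except adequacy is a monotone closure rule, and that each rule preserves adequacy) dispose of the verification uniformly rather than formula by formula. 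You also isolate explicitly the one genuinely delicate point, which the paper passes over: that closing off under the structural rules adds no new worlds to triples already in $Dom(\cE_0)$, so that $\cE$ really extends $\cE_0$; this is exactly what clause 2 of the definition of a basic evidence function (validity of the structural conditions on the domain) together with clause 1 (closure of the domain under shrinking the valuation) is there to guarantee. The paper sidesteps the issue by simply \emph{defining} $\cE=\cE_0$ on $Dom(\cE_0)$ and a different formula elsewhere, which shifts the burden to checking that this hybrid is itself closed — a check it does not carry out. Your invariant for that step (every $\cE_0$-witness persists at all domain triples reachable by the structural rules) is the right one, and the remaining bookkeeping, while tedious, is no less complete than the paper's own ``one can show by induction on the term $t$''.
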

\begin{proof}
Define $\cE(t,\Phi,g)$ by induction on the term $t$. 
For a justification variable $p$ if $(p,\Phi,g)\in Dom(\cE_0)$put $\cE(p,\Phi,g)= \cE_0(p,\Phi,g)$ , otherwise
$$
\cE(p,\Phi,g)= \left(\bigcup\{\cE_0(p,\Phi,f)\ |\ f\subsetneq g\}\right)\cup
\left(\bigcup\{\cE_0(p,\Psi,g\circ \sigma) \ |\ \Phi \mbox{ coincides with } \Psi\sigma \mbox{ for } \Psi\in Fm(\cE_0)\}\right)
$$
In particular, 
$\cE(p,\Phi,g)= \varnothing \ \mbox{ if }\ p\not\in JVar(\cE_0)$.
If a justification term $t$ is not a justification variable then define by induction
\begin{enumerate}
\item[E1]
$\cE(t_1\cdot t_2,\Phi,g)=\bigcup\{\cE(t_1,\Phi\to\Psi,f)\cap\cE(t_2,\Phi,f) |\ \Psi\in Fm(\folp), \ g\subseteq f\}$
\item[E2]
$\cE(t_1 + t_2,\Phi,g)=\cE(t_1,\Phi,g)\cup\cE(t_2,\Phi,g)$
\item[E3]
$
\begin{array}[t]{ll}
\cE(!t,\Phi,g)=
\bigcup\{\cE(t,\Psi,h) |\  Dom (h)\cap FV(\Psi) \subseteq X\subseteq Dom(g)\ \mbox{ and } h\subseteq g\} & \mbox{ if } \Phi \mbox{ is } \term{t}{X}\Psi\\[3pt]
\cE(!t,\Phi,g)=\varnothing & \mbox{ otherwise }\\ 
\end{array}
$
\item[E4]
$
\begin{array}[t]{ll}
\cE(gen_x(t),\Phi,g)=
\bigcup\{\cE(t,\Psi,g) |\  x\not \in Dom(g)\cap FV(\Psi)\} & \mbox{ if } \Phi=\forall x\Psi\\[3pt]
\cE(gen_x(t),\Phi,g)=\varnothing & \mbox{ otherwise }\\ 
\end{array}
$
\end{enumerate}
One can show that $\cE$ satisfies all conditions on evidence function by induction on the term $t$.  
\end{proof}

In all the examples below we use proposition 3 of Lemma \ref{lm_models} that allows us to define precisely valuation $\nu$ only on free variables of formulas in which we are interested, and take arbitrary values of $\nu$ for all other variables.

\begin{example} \em 
 Consider the following model in which formulas
 $\Box_xP(x) \rightarrow \Box_\varnothing P(x)$ and $\Box_xP(x) \rightarrow \Box_x \forall x P(x)$ are false. 
 It consists of one reflexive world with two-element domain, that is, $W = \{ w\}$, $R = \{ (w, w)\}$, $D_w = \{ a, b\}$. We take $I(P, w) = \{ a\} $. The truth value of these formulas does not depend on the evidence function, so we may take $\cE(t,\Phi,f)=\varnothing$ for all $t$, $\Phi$ and $f$ .  We take  $\nu(x) = a$. 

Since $(\cM,\nu),w\Vdash P(x)$ and $w$ is the only world accessible from $w$, by definition we conclude $(\cM,\nu),w\Vdash\Box_xP(x)$. 
However, $(\cM,\nu),w\not\Vdash \Box_\varnothing P(x)$ since  $(\cM,\nu^x_b),w\not \Vdash P(x)$ and $wRw$. Then,
$(\cM,\nu),w\not \Vdash \forall x P(x)$ whence $(\cM,\nu),w\not \Vdash \Box_x \forall x P(x)$. 

\begin{figure}[h]
\caption{One-point frame with two element domain.}
\label{modal1}
\begin{center}
\begin{tikzpicture}[modal]
\node[world] (w) [label=below:$w$] {a,b};
\path[->] (w) edge[reflexive above] (w);
\end{tikzpicture}
\end{center}
\end{figure}
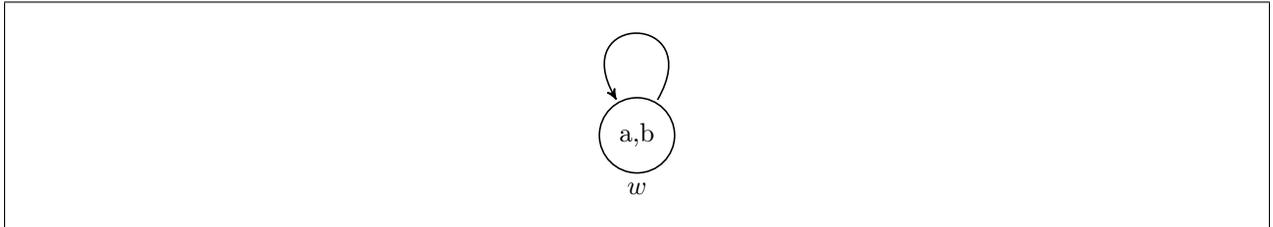
\end{example}

\begin{example} \em 
Consider the following formulas: 
\begin{eqnarray}
\label{form1} & \term{t}{\varnothing}P(x,y)\\
\label{form2} & \term{t}{x} P(x,y) \\ 
\label{form3} & \term{t}{y} P(x,y) \\ 
\label{form4} & \term{t}{xy} P(x,y)
\end{eqnarray}

In formula (\ref{form1}), all occurrences of individual variables are  bound. In (\ref{form2}) variable $x$ is free and $y$ is bound,  in (\ref{form3}) $x$ is bound and $y$ is free. In formula (\ref{form4}) all individual variables are free.

It is easy to check that for every model $\mathcal{M}$ formulas $\eqref{form1} \to \eqref{form2}\land \eqref{form3}$  and
$\eqref{form2}\lor \eqref{form3}\to \eqref{form4}$ are true in $\mathcal{M}$.

Indeed, for the first formula  by definition we should take a valuation $\nu$ and a possible world $w$ and show that if $\nu(x),\nu(y)\in D_w$ and  $(\mathcal{M},\nu),w\models \eqref{form1}$, then 
$(\mathcal{M},\nu),w\models \eqref{form2}\land \eqref{form3}$.
Assume that $\nu(x),\nu(y)\in D_w$ and $(\mathcal{M},\nu),w\models\eqref{form1}$, that is, 
$(\mathcal{M},\nu),w\models\term{t}{\varnothing}P(x,y)$. By definition of truth relation we get 
  $\forall u \in R(w) \ \forall d, e \in D_{u} \ (\mathcal{M}, \nu^{x, y}_{d,e}),  u\Vdash P(x,y)$ and  
  $w\in   \mathcal{E}(t, P(x,y), \varnothing)$. Since $\nu(x),\nu(y)\in D_w\subseteq D_u$ for all $u\in R(w)$, we have 
	$\forall u \in R(w) \ \forall d,e \in D_{u} \ (\mathcal{M}, \nu^{y}_{e}),  u\Vdash P(x,y)$ and $(\mathcal{M}, \nu^{x}_{d}),  u\Vdash P(x,y)$. Also by extension condition 
$w\in   \mathcal{E}(t, P(x,y), \nu\upharpoonright\{x\})$, 
$w\in   \mathcal{E}(t, P(x,y), \nu\upharpoonright\{y\})$, therefore 
$(\mathcal{M},\nu),w\models\term{t}{x}P(x,y)\land\term{t}{y}P(x,y)$. Validity of the second formula can be proven similarly.

Let us describe a model in which 
$\eqref{form4}\to\eqref{form2}\lor\eqref{form3}$ is false.
Consider the model $\cM_1$, based on the frame depicted in Figure \ref{modal1}, that is, $W = \{ w\}$, $R = \{ (w, w)\}$, $D_w = \{ a, b\}$. Let  $I(P, w) = D_w\times D_w$. Take 
$$
\begin{array}{c}
\mathcal{E}_0(t, P(x,y), \{\langle x,a\rangle,\langle y,b\rangle\} ) = \{ w\}\\
\mathcal{E}_0(t, P(x,y), \{\langle x,a\rangle\})=\mathcal{E}_0(t, P(x,y),\{\langle y,b\rangle\} ) = \varnothing\\
\end{array}
$$ 
and extend it to an evidence function by Lemma \ref{lm_ev}. 
Take valuation $\nu$ such that $\nu(x) = a$, $\nu(y) = b$.

For all $d,e \in D_w$ we have $(\cM_1, \nu^{xy}_{de}), w \Vdash P(x,y)$. Given that $w \in \cE(t, P(x,y), \nu\upharpoonright\{x,y\})$, formula \eqref{form4} is true in $(\cM_1, \nu)$. However,  $w \not \in \cE(t, P(x,y), \nu\upharpoonright\{x\})$. Thus, formula \eqref{form2} is false.  Similarly for  formula \eqref{form3}.

Now let us describe a model in which 
$\eqref{form2}\to\eqref{form3}$ is false. 
Consider the model $\cM_2$ identical to $\cM_1$ except for the evidence function, namely, here we take 
$$
\begin{array}{c}
\mathcal{E}_0(t, P(x,y), \{\langle x,a\rangle,\langle y,b\rangle\} ) = 
\mathcal{E}_0(t, P(x,y), \{\langle x,a\rangle\})=\{ w\}\\
\mathcal{E}_0(t, P(x,y),\{\langle y,b\rangle\} ) = \varnothing\\
\end{array}
$$ 
and extend $\mathcal{E}_0$ to the evidence function $\mathcal{E}$ by Lemma \ref{lm_ev}. Let valuation $\nu$ be such that $\nu(x) = a$, $\nu(y) = b$. 

For all $d,e \in D_w$ we have $(\cM_2, \nu^{xy}_{de}), w \Vdash P(x,y)$, therefore $(\cM_2, \nu^{x}_{d}), w \Vdash P(x,y)$ and 
$(\cM_2, \nu^{y}_{e}), w \Vdash P(x,y)$. Since $w \in \cE(t, P(x,y), \nu \upharpoonright \{ x\})$, formula (\ref{form2}) is true in $(\cM_2, \nu)$. However, $w \not \in \cE(t, P(x,y), \nu \upharpoonright \{ y\})$. Therefore, formula (\ref{form3}) is false.

Let us describe a model in which 
$\eqref{form2}\land\eqref{form3}\to\eqref{form1}$ is false. 
We take the model $\cM_3$ based on the same frame with interpretation $I(P, w) = \{  \langle a,a \rangle,$ $\langle a,b \rangle,$  $ \langle b,b \rangle \} $ and by Lemma \ref{lm_ev} extend $\mathcal{E}_0$ given by the equations
$$
\begin{array}{c}
\mathcal{E}_0(t, P(x,y), \{\langle y,b\rangle\} ) = 
\mathcal{E}_0(t, P(x,y), \{\langle x,a\rangle\})=\{ w\}\\
\cE(t, P(x,y), \varnothing)=\varnothing\\
\end{array}
$$ 
to the  evidence function. A valuation $\nu$ is such that $\nu(x) = a$, $\nu(y) = b$.
For all $d, e \in D_w$ we have $(\cM_3, \nu^y_d), w \Vdash P(x,y)$ and $(\cM_3, \nu^x_e), w \Vdash P(x,y)$. Given that $w \in \cE(t, P(x,y), \nu\upharpoonright\{x\}) \cap \cE(t, P(x,y), \nu\upharpoonright\{y\})$, we conclude that $(\cM_3, \nu)\models(\ref{form2})\land(\ref{form3})$.
However, $w \not \in \cE(t, P(x,y), \varnothing)$. Thus, $(\cM_3, \nu)\not\models(\ref{form1})$.  
\end{example}

\begin{example} \em
Let us construct a model in which formula $\forall x\ \term{t}{x} P(x)\to\Box\forall x P(x)$ is false. Note that the same reasonings applies to the corresponding modal formula $\forall x\ \Box_{x} P(x)\to\Box\forall x P(x)$ which is also false.

Note that this formula is true in any one-element model. Indeed, if $\forall x\ \term{t}{x} P(x)$ is true at the only world  $w$ of some model $\cM$, then by definition of the truth relation $P(x)$ is true at $w$ for each valuation of $x$ in the domain $D_w$. Since there are no possible worlds accessible from $w$ other than $w$ itself, we conclude that $\Box\forall x P(x)$
is true at $w$.

To falsify the given formula,
consider the model $\cM$ based on the two-element frame (Figure \ref{modal2}) with  $W = \{ w, u\}$,  $R = \{(w,w),(u,u), (w,u)\}$ and $D_w = \{a\}$, $D_u = \{a, b \}$. Take the interpretation $I(P, w) = I(P, u) = \{ a \}$ and the evidence function $\mathcal{E}$ extending $\mathcal{E}_0$ given  by the equations
$$
\begin{array}{c}
\mathcal{E}_0(t, P(x),  \varnothing) = \varnothing\\
\mathcal{E}_0(t, P(x),  \{\langle x,a \rangle\}) = \{w, u\}\\
\end{array}
$$
Choose  $\nu(x) = a$. 

For all $d \in D_w$ and for all $v \in R(w)$ we have $(\cM, \nu^x_d), v \Vdash P(x)$. Since  $w \in \cE(t, P(x), \nu \upharpoonright \{ x\})$, this gives  $(\cM, \nu), w \Vdash \forall x \term{t}{x} P(x)$. However, $(\cM, \nu), w \not \Vdash   \Box \forall x P(x)$   because $b \not \in I(P, u)$. 
Note that all formulas of the form $\term{s}{\varnothing} \forall x P(x)$ are false at $w$  for the same reason. 

\begin{figure}[h]
\caption{Two-point frame with increasing domains.}
\label{modal2}
\begin{center}
\begin{tikzpicture}[modal]
\node[world] (w) [label=below:$w$] {a};
\node[world] (v) [label=below:$u$,right=of w] {a,b};
\path[->] (w) edge (v);
\path[->] (w) edge[reflexive above] (w);
\path[->] (v) edge[reflexive above] (v);
\end{tikzpicture}
\end{center}
\end{figure}
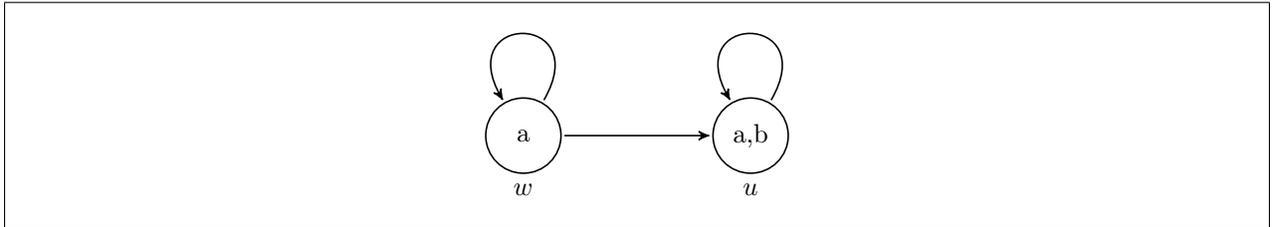
\end{example}

\begin{example} \em
Let us show how substitution condition for the evidence function works.  
For the first example consider  formula $\exists x (\term{t}{x} P(x))\to\exists y (\term{t}{y} P(y))$ which should be true in all models if our definition is relevant to the intuition of what principles are valid. 

For simplicity, consider one-element reflexive frame depicted by Figure \ref{modal1} and a model $\cM_1$ based on it with an arbitrary $I(P,w)\subseteq\{a,b\}$. By Lemma \ref{lm_ev} take evidence function $\cE$ extending mapping $\cE_0$ given  by the equations
$$
\begin{array}{c}
\cE_0(t, P(x),\{\langle x,a\rangle\})=\{w\},\ \ 
\cE_0(t, P(x),\varnothing)=\varnothing.
\end{array}
$$
Valuation $\nu$ is such that $\nu(x)=a$, $\nu(y)=b$. 
Formula $\exists x (\term{t}{x} P(x))$ is true at $w$ if  and only if $a\in I(P,w)$. Assume that this is the case, $a \in  I(P,w)$, $b \not\in  I(P,w)$. 
Note that formula $P(x)$ can be obtained by  applying the substitution $\sigma = \{ \langle y, x \rangle \}$ to $P(y)$, therefore $w\in \cE(t,P(y)\sigma,\nu\upharpoonright \{x\})$, thus by substitution condition on evidence function  $w \in \cE(t, P(y), \nu \circ \sigma)$. Given that $(\cM_1, \nu^y_a),w \Vdash P(y)$, we conclude $(\cM_1,\nu),w \Vdash \exists y \term{t}{y} P(y)$. 

For another example take a bit more nontrivial formula $\exists x\ \term{t}{x} P(x,x)\to \exists x\exists y\ \term{t}{xy} P(x,y)$ which also should be a valid principle if our semantics is relevant. 

As above, for simplicity we consider a model $\cM_2$ based on a one-element reflexive frame depicted by Figure \ref{modal1}. We  use Lemma \ref{lm_ev} and take any evidence function $\cE$ extending $\cE_0$ given by equations 
$$
\begin{array}{c}
\cE_0(t, P(x,x),\{\langle x,a\rangle\})=\{w\},\ \ 
\cE_0(t, P(x,x),\varnothing)=\varnothing\\
\end{array}
$$
Take  $I(P,w)$ to be an arbitrary subset of $\{a,b\}^2$  and valuation $\nu$ such that $\nu(x)=a$, $\nu(y)=b$.  

The formula $\exists x (\term{t}{x} P(x))$ is true at $w$ if  and only if $\langle a,a\rangle\in I(P,w)$. Assume that this is the case, $I(P,w)=\{\langle a,a\rangle\}$. 
Note that the formula $P(x,x)$ coincides with $P(x,y) \sigma$ for the substitution $\sigma = \{\langle y, x \rangle,\langle x,x\rangle\}$. By the choice of the evidence function and valuation, $w\in\cE(t,P(x,x),\nu\upharpoonright\{x\})$. Note that $\nu\circ\sigma= \{\langle x,a\rangle,\langle y,a \rangle\}$ and by substitution condition  on evidence function $w \in \cE(t,P(x,y), 
\nu\circ\sigma)$. Hence, $(\cM_2, \nu), w \Vdash \exists x \exists y\ \term{t}{xy} P(x,y)$.
\end{example}

\begin{example}
\em
In order to illustrate how !--condition works, let us check validity of a particular instance of the axiom (A6) 
\begin{equation}
\label{eq111}
\term{t}{xy}P(x,z) \to \term{!t}{xy}\term{t}{xy}P(x,z)
\end{equation}
in a model $\cM_1$, based on a three-element frame (Figure \ref{model3}) with  $W = \{ w, u, v\}$,  $R = \{(w,w),$ $(u,u),$ $(v,v),$ $(w,u),$ $(w,v)\}$ and $D_w = \{a\}$, $D_u = \{a, b \}$, $D_v = \{a, c \}$. Let $I(P, w)$, $I(P, u)$ and $I(P, v)$ be arbitrary subsets of $D_w^2$, $D_u^2$ and $D_v^2$. By Lemma \ref{lm_ev}, we take the evidence function $\mathcal{E}$, extending $\mathcal{E}_0$ given  by the following equations
$$
\begin{array}{c}
\mathcal{E}_0(t, P(x,z),  \varnothing) = \varnothing\\
\mathcal{E}_0(t, P(x,z),  \{\langle x,a \rangle\}) = \{w, u, v\}\\
\end{array}
$$
Choose  $\nu(x)=\nu(y) = a$. 
\begin{figure}[h]
\caption{Three-point frame with increasing domains.}
\begin{center}
\begin{tikzpicture}[modal]
\label{model3}
\node[world] (w) [label=below:$w$]{$a$};
\node[world] (u) [label=below:$u$,left=of w]{$a,b$};
\node[world] (v) [label=below:$v$,right=of w]{$a,c$};
\path[->] (w) edge (v);
\path[<-] (u) edge (w);
\path[->] (w) edge[reflexive above] (w);
\path[->] (u) edge[reflexive above] (u);
\path[->] (v) edge[reflexive above] (v);
\end{tikzpicture}
\end{center}
\end{figure}
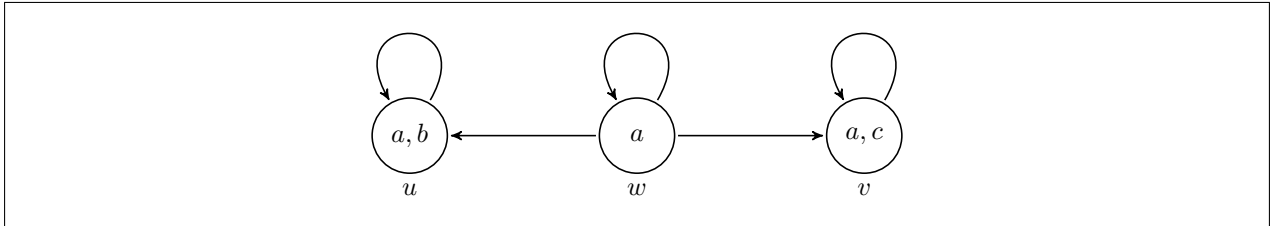

By the choice of evidence function, $w \in \cE(t, P(x,z), \nu\upharpoonright \{ x\})$. The formula $\term{t}{xy}P(x,z)$ is true in $w$ if and only if $(a,a) \in I(P, w)$, $\{ (a,a), (a,b)\} \subseteq I(P, u)$, $\{ (a,a), (a,c) \} \subseteq I(P, v)$. Assume that this is the case. 
By the ! condition on the evidence function, it holds that $w \in \cE(!t, \term{t}{xy}P(x,z),\nu\upharpoonright \{ x,y\} )$ since $\nu\upharpoonright \{ x,y\} \in ext(\{\langle x,a \rangle\}, D_w)$ and $\{ x\} \subseteq \{ x,y\}\subseteq Dom(\nu\upharpoonright \{ x,y\}) = \{x,y \}$.

It holds that $\nu(x),\nu(y) \in (D_u \cap D_v)$.
 For each $d \in D_u$ and each $e \in D_v$ we have  $(\cM_1, \nu^{z}_{d}),  u\Vdash P(x,z)$ and $(\cM_1, \nu^{z}_{e}),  v\Vdash P(x,z)$ respectively. Since $u, v \in \cE(t, P(x,z),  \{\langle x,a \rangle\})$, this gives $(\cM_1, \nu),  u\Vdash \term{t}{xy}P(x,z)$ and $(\cM_1, \nu),  v\Vdash \term{t}{xy}P(x,z)$.  
Therefore, we obtain $w \Vdash \term{!t}{xy}\term{t}{xy}P(x,z)$. Thus formula \eqref{eq111} is true at $w$. 

Now let us construct a model, in which the formula $\term{t}{x}P(x) \to \term{!t}{x}\term{t}{xy}P(x)$ is false. 

Consider a model $\cM_2$ based on the reflexive one-point frame depicted in Figure \ref{modal1}. Let $I(P,w) = D_w^2$. According to Lemma \ref{lm_ev}, we take the evidence function $\mathcal{E}$, extending $\mathcal{E}_0$ given  by the following equations
$$
\begin{array}{c}
\mathcal{E}_0(t, P(x),  \{\langle x,a \rangle\}) = \{w\}\\
\mathcal{E}_0(t, P(x),  \varnothing) = \varnothing\\
\end{array}
$$
We choose $\nu(x)=\nu(y)= a$.

By the construction of the model, we have $(\cM_2, \nu),w \Vdash P(x)$ and $w \in \cE(t, P(x), \nu\upharpoonright \{ x\})$. Hence, we obtain $(\cM_2, \nu),w \Vdash \term{t}{x}P(x)$ and   $(\cM_2, \nu),w \Vdash \term{t}{xy}P(x)$. 
Note that 
$\nu\upharpoonright(FV(\term{t}{xy}P(x)) \cap \{x\})=\{(x,a)\}$. 
So, for $(\cM_2, \nu),w \Vdash \term{!t}{x}\term{t}{xy}P(x)$ it remains to show, that $w\in\cE(t,\term{t}{xy}P(x),\{(x,a)\}$. But it is not necessarily the case. For example, for $\cE$ constructed in the proof of Lemma \ref{lm_ev} we have 
$$
\cE(!t,\term{t}{xy}P(x),\{(x,a)\})=
\bigcup\{\cE(t,P(x),h)\ |\  Dom (h)\cap FV(P(x)) \}\subseteq \{x,y\}\subseteq \{x\}=\varnothing.
$$ 
Thus, $(\cM_2, \nu),w \not \Vdash \term{!t}{x}\term{t}{xy}P(x)$ and the formula is false at $w$.  
\end{example}

\subsection{Soundness}

\begin{definition}
\em
    Let $\Gamma$ be a set of formulas, $\Phi$ be a formula. $\Phi$ is {\it a logical consequence of }$\Gamma$ (or $\Phi$ {\it logically follows from} $\Gamma$, notation $\Gamma\models\Phi$), if for every model $\cM=(W,R,(D_w)_{w\in W}, I, \cE)$, every valuation $\nu$ and possible world $w\in W$ from $(\cM,\nu),w\Vdash \Gamma$ and $\nu(FV(\Phi))\subseteq D_w$ it follows that $(\cM,\nu),w \Vdash \Phi$.
\end{definition}

\begin{theorem}[Soundness]
\em
	For each set of formulas $\Gamma$  and formula $\Phi$  
 $$\Gamma \vdash_{CS} \Phi \Rightarrow \Gamma \vDash_{CS} \Phi$$ 
\end{theorem}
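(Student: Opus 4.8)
The plan is to prove soundness by induction on the derivation of $\Phi$ from $\Gamma$ in $\folpcs{CS}$. I fix a model $\cM = (W, R, (D_w)_{w \in W}, I, \cE)$ meeting $CS$, a valuation $\nu$ and a world $w$ with $(\cM,\nu), w \Vdash \Gamma$ and $\nu(FV(\Phi)) \subseteq D_w$, and I must show $(\cM,\nu), w \Vdash \Phi$. For the base cases I have to check every axiom schema (A0)--(A8), (A1$'$)--(A7$'$), and that every $\term{c}{\varnothing}\Psi \in CS$ is valid. Since the remark after Definition~\ref{def_axioms} shows (A3) and (A7$'$) are derivable, I can skip them, but I still need to verify the rest directly against the truth clauses. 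The $CS$-case is immediate from the condition that $\cM$ meets $CS$: $\cE(c,\Psi,\varnothing) = W$ gives clause (3) of the truth definition for $\term{c}{\varnothing}\Psi$, and clauses (1)--(2) follow because $\Psi$ is a valid axiom (handled by the same induction, treating axioms as already verified) together with $R$-closure of truth along $R$, which itself needs a small auxiliary lemma.

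First I would establish the two auxiliary facts that the axiom verifications lean on: (a) a \emph{monotonicity/persistence lemma}, $wRu$ and $(\cM,\nu),w \Vdash \Box_X\Psi$ (or $\term{t}{X}\Psi$) imply $(\cM,\nu),u \Vdash \Box_X\Psi$ (resp.\ $\term{t}{X}\Psi$), using transitivity of $R$, monotonicity of domains, and the $R$-closure condition on $\cE$; and (b) that the two quantifier-block clauses in the definitions of $\Box_X\Psi$ and $\term{t}{X}\Psi$ are insensitive to the choice of enumeration of $FV(\Psi)\setminus X$ and interact correctly with $f^{x_1,\dots,x_n}_{d_1,\dots,d_n}$ — this is essentially bookkeeping with Lemma~\ref{lm_models}(3). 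With these in hand, the propositional-style axioms (A0), (A4$'$) are routine; (A3$'$) uses reflexivity $wRw$; (A6$'$) uses transitivity; (A1), (A2), (A1$'$), (A2$'$) are the heart of the ``binding'' content and require carefully comparing the clause for $X$ with the clause for $X\cup\{y\}$, splitting on whether $y \in FV(\Psi)$, and invoking the restriction and extension conditions on $\cE$; (A7), (A7$'$) use the $gen_x$ condition on $\cE$ together with the $\forall$-clause and the side condition $x \notin X$; (A5) uses the $+$ condition; (A4) uses the $\cdot$ condition together with the fact that $FV(\Phi\to\Psi)$, $FV(\Phi)$, $FV(\Psi)$ must be reconciled when restricting $\nu$ — here one uses extension/restriction conditions on $\cE$ to pass between $\nu\upharpoonright(FV(\Phi\to\Psi)\cap X)$ and the narrower restrictions; (A6) uses the $!$ condition, matching its hypothesis $Dom(f)\cap FV(\Phi)\subseteq X \subseteq Dom(g)$ against the restricted valuations appearing in clause (3); and (A8) is essentially immediate since the truth clause for $\term{t}{X}\Psi$ literally contains the truth clause for $\Box_X\Psi$ as its first two conjuncts.

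For the inductive (rule) cases: modus ponens (R1) is the standard argument, using that if $(\cM,\nu),w\Vdash\Phi\to\Psi$ and $(\cM,\nu),w\Vdash\Phi$ then $(\cM,\nu),w\Vdash\Psi$, with a small check that $\nu(FV(\Psi))\subseteq D_w$ follows from $\nu(FV(\Phi\to\Psi))\subseteq D_w$ via Lemma~\ref{lm_models}(1) applied to the hypotheses (this is where the restriction on derivations from hypotheses matters). Generalization (R2), applied only to $x\notin FV(\Gamma)$, requires showing $(\cM,\nu),w\Vdash\forall x\Psi$ from the induction hypothesis; since $x\notin FV(\Gamma)$, for any $a\in D_w$ we still have $(\cM,\nu^x_a),w\Vdash\Gamma$ by Lemma~\ref{lm_models}(3) (distinguishing whether $x\in FV(\Gamma)$ — it is not), so the IH gives $(\cM,\nu^x_a),w\Vdash\Psi$ for all such $a$. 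Necessitation (R3), applied only to axioms, requires $(\cM,\nu),w\Vdash\Box_\varnothing\Psi$ whenever $\Psi$ is an axiom: clause (1) is vacuous ($X=\varnothing$), and clause (2) demands $(\cM,\nu'),u\Vdash\Psi$ for every $u\in R(w)$ and every valuation $\nu'$ agreeing with $\nu$ off $FV(\Psi)$ — but this is exactly validity of the axiom $\Psi$ in $\cM$, which is the already-completed axiom part of the induction. I expect the main obstacle to be the careful handling of the free-variable restrictions of $\nu$ in the $\term{t}{X}\Psi$ clauses for axioms (A1), (A2), (A4), (A6): one has to move between several restrictions of $\nu$ — to $FV(\Psi)\cap X$, to $FV(\Psi)$, to $Dom$'s of various extensions — and every such move must be licensed by exactly one of the extension, restriction, substitution, $\cdot$, $!$, or $gen_x$ conditions, so the bulk of the real work is matching each syntactic step to the precise closure condition that was built into Definition~\ref{fitting_model} for this purpose.
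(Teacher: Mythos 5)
Your proposal is correct and follows essentially the same route as the paper: induction on the derivation, verifying each axiom schema against the matching closure condition on $\mathcal{E}$ (extension/restriction for (A1)--(A2), $\cdot$ for (A4), $!$ for (A6), $gen_x$ for (A7), reflexivity and transitivity of $R$ for (A3) and (A6$'$)), and handling the rules with the standard restrictions on generalization and necessitation. The only detail worth flagging is that intermediate formulas $\Phi_i$ of the derivation may have free variables outside $FV(\Gamma,\Phi)$ on which $\nu$ need not land in $D_w$; the paper patches this by passing to a modified valuation $\nu'$ that agrees with $\nu$ where $\nu(x)\in D_w$ and is redirected into $D_w$ elsewhere, which is slightly more than the ``small check'' you allot to it.
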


\begin{proof}
 
Assume that $\mathcal{M}$ is a Fitting model meeting constant specification $CS$. 
We take $w\in W$. Let $\nu$ be any valuation such that $(\mathcal{M},\nu),w\Vdash \Gamma$ (whence $\nu(FV(\Gamma))\subseteq D_w$) and $\nu(FV(\Phi))\subseteq D_w$. Let us prove that $(\mathcal{M},\nu),w\Vdash \Phi$. 
Let $\Phi_1,\ldots,\Phi_n$ be a derivation of $\Phi$ from $\Gamma$.
By induction on $i$ we show that  
$(\mathcal{M},\nu'),w\Vdash \Phi_i$ for $i=1,\ldots,n$.
Since some of $\Phi_i$ may contain free variables that are not in $FV(\Gamma,\Phi)$ and $\nu$ does not necessarily return elements of $D_w$ for such variables, in order to make proof by induction on derivation possible we consider any valuation $\nu'$ which coincides with $\nu$ on variables for which $\nu(x)\in D_w$ and returns some $a\in D_w$ for other variables.  
 Since $\nu'$ coincide with $\nu$ on variables from $FV(\Gamma,\Phi)$, this proves the theorem. Without loss of generality we may assume that $\nu'$ coincides with $\nu$.

The case $\Phi_i\in\Gamma$ is trivial. Let us check that all axioms of $\folpcs{0}$ and formulas from $CS$ are true at all possible worlds.
Then in case $\Phi_i$ is an axiom of $\folpcs{0}$ or belongs to $CS$ or is obtained from an axiom by necessitation rule we have  $(\mathcal{M},\nu),w\Vdash \Phi_i$.

Axiom (A0). The case of propositional axioms is obvious.  Validity of axioms $\forall x \Phi(x)\to\Phi [x/y]$ and $\Phi [y/x]\to\exists x \Phi(x)$ is due to Lemma \ref{lm_models}(2). 


Axiom (A1). 
Suppose $y \not \in FV(\Phi)$. In this case 
$FV(\Phi) \setminus (X \cup \{ y\})= FV(\Phi) \setminus X$ and 
$FV(\Phi) \cap (X \cup \{ y\}) = FV(\Phi) \cap X$. Combining extension and restriction conditions we have $\mathcal{E}(t,\Phi,\nu \upharpoonright FV(\Phi) \cap (X\cup \{y\}) ) = \mathcal{E}(t,\Phi, \nu \upharpoonright FV(\Phi)\cap X ) $, therefore by definition of the truth relation  $(\mathcal{M}, \nu), w \Vdash \term{t}{X \cup \{y\} } \Phi$  and $(\mathcal{M}, \nu), w \Vdash \term{t}{X} \Phi$  are equivalent.


Axiom (A2). Assume that $\nu(X\cup\{y\})\subseteq D_w$ and $(\mathcal{M}, \nu), w \Vdash \term{t}{X} \Phi$, that is, 
 \begin{enumerate}
 \item $(\mathcal{M}, \nu^{y_1, \ldots, y_n}_{d_1, \dots, d_n}),  u\Vdash \Phi,$  
  for
$\{ y_1, \dots, y_n\} = FV(\Phi) \setminus X$, for all $u \in R(w)$ and  $  \{ d_1, \ldots, d_n\} \subseteq D_{u}$; 
    \item $w \in \mathcal{E}(t, \Phi, \nu \upharpoonright FV(\Phi) \cap X )$.
 \end{enumerate}
 There are two possible options  $y \in FV(\Phi) \setminus X$ or $y \not \in FV(\Phi) \setminus X$. In the latter case, adding a variable $y$ to the set $X$ does not change the set $\{ y_1, \dots, y_n\}$ and both conditions remain true for the formula $\term{t}{X \cup \{y\} } \Phi$, similarly to the proof for axiom (A1). In the first case without loss of generality we assume that $y$ coincides with $y_n$. Since 
 $\nu^{y_1, \ldots, y_{n-1}}_{d_1, \ldots, d_{n-1}}$ coincides with 
 $\nu^{y_1, \ldots, y_{n-1},y_n}_{d_1, \ldots, d_{n-1},\nu(y_n)}$, the first condition for the truth of 
 $\term{t}{X \cup \{y\} } \Phi$
 follows from the first condition for the truth of 
 $\term{t}{X} \Phi$. Then, by extension condition on evidence function, $w \in \mathcal{E}(t, \Phi, \nu \upharpoonright FV(\Phi) \cap (X \cup \{ y\} )      )$.  Therefore $(\mathcal{M}, \nu), w \Vdash \term{t}{X \cup \{ y\}} \Phi$.


Validity of (A3) is due to the reflexivity of the relation $R$. 
Among axioms that specify operations on justifications we consider (A4), (A6) and (A7) as the most nontrivial case and skip (A5) which can be treated similarly.

(Axiom A4). If $w\Vdash \term{t}{X} \Phi \to \Psi $ and $ w\Vdash \term{s}{X}\Phi$, then 
\begin{eqnarray}
\label{eq2}
    & w \in \cE(t, \Phi \to \Psi, \nu \upharpoonright FV(\Phi \to \Psi) \cap X),\ 
    w \in \cE(s, \Phi, \nu \upharpoonright FV(\Phi) \cap X), & \\[7pt] 
\label{eq3}    
& (\mathcal{M}, \nu^{y_1, \ldots, y_m}_{d_1, \dots, d_m}),  u\Vdash \Phi \mbox{ and }
    (\mathcal{M}, \nu^{y_1, \ldots, y_{m+k}}_{d_1, \dots, d_{m+k}}),  u\Vdash \Phi \to \Psi & \\
    & \mbox{ for } \{y_1, \ldots, y_m \} = FV(\Phi) \setminus X,\   
    \{y_1, \ldots, y_m,\ldots,y_{m+k} \} = FV(\Phi \to \Psi) \setminus X & \nonumber\\
    & \mbox{ for all } u \in R(w) 
\mbox{ and } d_1, \ldots, d_{m+k} \in D_u & \nonumber
    \end{eqnarray}

From \eqref{eq2} since $\nu \upharpoonright FV(\Phi) \cap X \subseteq \nu \upharpoonright FV(\Phi \rightarrow \Psi) \cap X$
by extension condition we have $w \in \mathcal{E}(s, \Phi, \nu \upharpoonright FV(\Phi \rightarrow \Psi) \cap X)$, whence $w \in \mathcal{E}(t \cdot s, \Psi, \nu \upharpoonright FV(\Phi \rightarrow \Psi) \cap X)$ due to $\cdot$ condition. Therefore we have by restriction condition
\begin{equation}
\label{eq3,5}
w \in \mathcal{E}(t \cdot s, \Psi, \nu \upharpoonright FV(\Psi) \cap X) 
\end{equation}
From \eqref{eq3} by Lemma \ref{lm_models}(3) one has $(\mathcal{M}, \nu^{y_1, \ldots, y_{m+k}}_{d_1, \dots, d_{m+k}}), w \Vdash \Phi$, therefore $(\mathcal{M}, \nu^{y_1, \ldots, y_{m+k}}_{d_1, \dots, d_{m+k}}), w \Vdash \Psi$. By Lemma \ref{lm_models}(3), 
\begin{equation}
\label{eq3,7}
(\mathcal{M}, \nu^{y_l, \ldots, y_{m+k}}_{d_l, \dots, d_{m+k}}), w \Vdash \Psi, \mbox{ where }  \{ y_l, \ldots, y_{m+k}\} = FV(\Psi) \setminus X.
\end{equation} 
Thus, from \eqref{eq3,5} and \eqref{eq3,7} we have  $(\mathcal{M}, \nu), w\Vdash  \term{t \cdot s}{X} \Psi$.


(Axiom A6). Suppose that $(\mathcal{M}, \nu), w \Vdash \term{t}{X} \Phi$, that is, 
\begin{eqnarray}
\label{eq06} & \nu(X)\subseteq D_w;\\
		\label{eq6} & (\mathcal{M}, \nu^{y_1, \ldots, y_n}_{d_1, \dots, d_n}),  u\Vdash \Phi  
  \mbox{ for }
\{ y_1, \dots, y_n\} = FV(\Phi) \setminus X, \mbox{ for all } u \in R(w) \mbox{ and }    \{ d_1, \ldots, d_n\} \subseteq D_{u};& \\ 
 \label{eq7}  & w \in \mathcal{E}(t, \Phi, \nu \upharpoonright (FV(\Phi) \cap X)).& 
\end{eqnarray}
In view of \eqref{eq06} in order to prove that $(\mathcal{M}, \nu), w \Vdash \term{!t}{X}\term{t}{X} \Phi$ we have to show that 
\begin{eqnarray}
\label{eq8} & (\mathcal{M}, \nu),  u\Vdash \term{t}{X}\Phi  \mbox{ for all } u \in R(w)& \\ 
 \label{eq9}  & w \in \mathcal{E}(!t, \term{t}{X}\Phi, \nu \upharpoonright X).& 
\end{eqnarray}
Take $f=\nu \upharpoonright (FV(\Phi) \cap X)$, $g= \nu \cap X$. 
From \eqref{eq7} and \eqref{eq06} we have 
$$
w \in \mathcal{E}(t, \Phi, f)\ \mbox{ and }\ 
FV(\Phi)\cap Dom(f) = FV(\Phi) \cap X \subseteq X= Dom(g),
$$
therefore by $!$ condition on evidence function  $w \in \mathcal{E}(!t, \term{t}{X}\Phi, g)$, that is, \eqref{eq9}. 
It remains to establish  \eqref{eq8}, that is, to show that $wRu$ implies $(\mathcal{M}, \nu),  u\Vdash \term{t}{X} \Phi$. Indeed, from \eqref{eq06} and monotonicity of domains we get $\nu(X)\subseteq D_u$.
We take arbitrary $v\in R(u)$ and $\{ d_1, \ldots, d_n\} \subseteq D_{v}$. By transitivity of $R$ we conclude $v\in R(w)$, therefore from \eqref{eq6}
 $(\mathcal{M}, \nu^{y_1, \ldots, y_n}_{d_1, \dots, d_n}),  v\Vdash \Phi$. Finally, by \eqref{eq7} and $R$--closure condition on evidence function, $u \in \mathcal{E}(t, \Phi, \nu \upharpoonright FV(\Phi) \cap X)$.


Axiom (A7). Suppose $(\mathcal{M}, \nu), w \Vdash \term{t}{X} \Phi$ and $x \not \in X$. Let $\{ y_1, \dots, y_n\} = FV(\Phi) \setminus X$. We have  
\begin{eqnarray}
\label{eq10}
&\forall u \in R(w) \forall d_1, \ldots, d_n \in D_{u}\ (\mathcal{M}, \nu^{y_1, \ldots, y_n}_{d_1, \dots, d_n}),  u\Vdash \Phi
& \\ 
\label{eq11}
&\mbox{ and } w \in \mathcal{E}(t, \Phi, \nu \upharpoonright (FV(\Phi) \cap X)).&
\end{eqnarray}
Since $x \not \in X$ we conclude $FV(\Phi) \cap X = FV(\forall x \Phi) \cap X$ and $x\not\in Dom(\nu \upharpoonright (FV(\Phi) \cap X))$. Thus from \eqref{eq11} by $gen_x$ condition on evidence function we have 
\begin{equation}
\label{eq20}
w \in \mathcal{E}(gen_x(t), \forall x \Phi, \nu \upharpoonright (FV(\forall x \Phi) \cap X)).
\end{equation}
It remains to show that 
\begin{equation}
\label{eq21}
\forall u \in R(w) \  \forall  d_1, \ldots, d_m\in D_u \ (\mathcal{M}, \nu^{z_1, \ldots, z_m}_{d_1, \dots, d_m}),  u\Vdash \forall x \Phi.
\end{equation}
where $\{z_1, \ldots, z_m\}=FV(\forall x\Phi)\setminus X$.
There are two possible cases. If $x \not \in FV(\Phi)$ then $m=n$ and  
$\{z_1, \ldots, z_m\}=\{y_1, \ldots, y_n\}$, use Lemma \ref{lm_models} (3) and \eqref{eq10}.  
If $x \in FV(\Phi)$ then $m=n-1$. Without loss of generality we assume that $x$ is $y_n$ and $\{z_1, \ldots, z_m\}=\{y_1, \ldots, y_{n-1}\}$. Then \eqref{eq21} follows from definition of truth for the  universal quantifier.  From (\ref{eq20}) and (\ref{eq21}) we have $(\mathcal{M}, \nu), w \Vdash \term{gen_x(t)}{X} \forall x \Phi$. 

Axiom  (A8). $\term{t}{X}\Phi \to \Box_X \Phi$ validity trivially follows from definition of truth relation. 

The proof of soundness of axioms (A1$'$)--(A7$'$) is a simplified version of the proof of soundness for (A1)--(A7). 

The induction step for Modus Ponens and generalization are standard. \textcolor{blue}{For Modus Ponens, assume that $\Psi$ is obtained from $\Phi$ and $\Phi\to\Psi$. Consider valuation $\nu'$, which coincides with $\nu$ on variables from $FV(\Psi)$ and returns some $a\in D_w$ for all the remaining variables. By the induction hypothesis, $(\mathcal{M}, \nu'), w \Vdash\Phi\to\Psi$ and $(\mathcal{M}, \nu'), w \Vdash\Phi$. Therefore $(\mathcal{M}, \nu'), w \Vdash\Psi$, whence $(\mathcal{M}, \nu), w \Vdash\Psi$.} For generalization,
if $\forall x \Phi$ is obtained from $\Phi$ \textcolor{blue}{and $\Gamma'\subseteq\Gamma$ is the set of hypotheses on which $\Phi$ depends. Then  $x$ does not occur free in  
$\Gamma'$. Therefore if $(\mathcal{M}, \nu), w \Vdash\Gamma$, then $(\mathcal{M}, \nu^x_d), w \Vdash\Gamma'$.} By the induction hypothesis $(\mathcal{M}, \nu^x_d), w \Vdash\Psi$
for any $d\in D_w$, therefore 
$(\mathcal{M}, \nu), w \Vdash\forall x\Psi$. 
\end{proof}

\section{Strong Completeness} 
\label{s_completeness}

\begin{theorem}[Strong Completeness]
\em
\label{st_compl}
 For each constant specification $CS$, set of closed formulas $\Gamma$  and a closed formula $\Phi$ 
    if $\Gamma \not\vdash_{CS} \Phi$ then there exists a  model $\cM$ meeting $CS$, a valuation $\nu$ and a possible world $w$ such that
    $(\cM, \nu), w\Vdash \Gamma$ but  $(\cM, \nu),w\not\Vdash \Phi$.
\end{theorem}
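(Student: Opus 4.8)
The plan is a Henkin-style canonical model construction. Since $\Gamma\not\vdash_{CS}\Phi$, the set $\Gamma\cup\{\neg\Phi\}$ is $\folpcs{CS}$-consistent. I would first enlarge the language by a countably infinite fresh supply of individual variables (``parameters''), and prove a Lindenbaum--Henkin lemma: every consistent set of formulas in a language with infinitely many unused parameters extends to a maximal consistent, $\exists$-saturated set $T$, i.e.\ one in which $\exists x\Psi\in T$ implies $\Psi[x/a]\in T$ for some parameter $a$. The worlds of the canonical model $\cM$ are such saturated maximal consistent sets, each equipped with its own countable parameter supply $C_w$, with $D_w:=C_w$; the accessibility relation $wRu$ is declared to hold when $C_w\subseteq C_u$ and every formula ``transmitted by a box at $w$'' lies in $u$ (precisely: whenever $\Box_X\Psi\in w$, then, writing $\{\vec y\}=FV(\Psi)\setminus X$, every instance $\Psi[\vec y/\vec d\,]$ with $\vec d\subseteq C_u$ is in $u$). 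Reflexivity of $R$ comes from (A3$'$) together with (A7$'$) and Bernays' rule, transitivity from (A6$'$), and the domains are monotonic by construction. The evidence function is read off membership: $w\in\cE(t,\Psi,f)$ iff $Im(f)\subseteq D_w$ and the corresponding substitution instance of $\term{t}{\,\cdot\,}\Psi$ (indices and free occurrences filled in from $f\!\upharpoonright\! FV(\Psi)$) lies in $w$.

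One then verifies that $(\cM,R,(D_w),I,\cE)$ is a Fitting model meeting $CS$: adequacy and the $R$-closure, restriction and extension/index conditions on $\cE$ from (A1), (A2) and the definition of $R$; the $\cdot$, $+$, $!$ and $gen_x$ conditions from (A4)--(A7); the substitution condition from the behaviour of substitution composition together with the Substitution Lemma; and $\cE$ meets $CS$ because $\term{c}{\varnothing}\Phi\in CS$ is then a theorem, hence a member of every world. The core is the Truth Lemma: for every formula $\Psi$, world $w$ and valuation $\nu$ into $D_w$ with $\nu(FV(\Psi))\subseteq D_w$, one has $(\cM,\nu),w\Vdash\Psi$ iff $\Psi\nu\in w$, proved by induction on $\Psi$. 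Atomic and boolean cases are routine; the negation clause matches exactly the membership condition precisely because of the $\nu(FV)\subseteq D_w$ proviso built into its definition. The $\forall x$ case uses $\exists$-saturation in one direction and maximality in the other. The $\Box_X\Psi$ and $\term{t}{X}\Psi$ cases rely on an existence lemma: if the relevant instance is not in $w$, then the set of its ``boxed consequences'' (universal closures in the quantified variables of the formulas $\Theta$ with $\Box_X\Theta$-instances in $w$) together with the negation of the offending instance of $\Psi$ is consistent, hence extends to a saturated world $u$ with $wRu$ falsifying $\Psi$; consistency is argued from (A1$'$), (A2$'$), (A4$'$), (A6$'$), (A7$'$) and necessitation. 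For $\term{t}{X}\Psi$ one additionally invokes (A8) so that the modal conjunct of the truth clause is inherited from the $\Box_X\Psi$ case, while the evidence conjunct holds by the very definition of $\cE$. Finally, let $w_0$ be a saturated maximal consistent extension of $\Gamma\cup\{\neg\Phi\}$; since $\Gamma$ and $\Phi$ are closed, any valuation $\nu$ into $D_{w_0}$ works, and the Truth Lemma gives $(\cM,\nu),w_0\Vdash\Gamma$ and $(\cM,\nu),w_0\not\Vdash\Phi$.

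I expect the main obstacle to be making the evidence function of the canonical model genuinely satisfy \emph{all} the conditions of Definition \ref{fitting_model}. The varying-domain bookkeeping --- guaranteeing that every freshly created accessible world has enough new parameters for both its Henkin witnesses and the instantiations demanded by the box, while keeping each $C_w$ co-infinite --- is routine but must be done stage by stage. More delicate is that the extension and substitution conditions on $\cE$ ask about renaming and instantiating the free individual variables of a formula occurring under a $\term{t}{X}$ operator, and these moves are not reflected by any single object-level implication of $\folpcs{0}$; pushing them through in the canonical model is likely to require either constructing $\cE$ via Lemma \ref{lm_ev} from a basic evidence function defined on justification variables over a suitably restricted family of valuations, or strengthening the saturation demanded of worlds (closure under the variable substitutions licensed by Lemma \ref{lm_subst}) so that the needed instances of $\term{t}{X}\Psi$ are already present. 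Getting this interface between individual-variable substitution and the indexed operators $\Box_X$, $\term{t}{X}$ exactly right, without breaking the Truth Lemma, is where the real work lies; everything else is the standard first-order modal completeness argument adapted to the justification and binding-modality clauses.
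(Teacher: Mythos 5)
Your plan is essentially the paper's own proof: worlds are maximal, consistent, $\exists$-complete sets of formulas closed over a private countable supply of fresh individual variables which also serves as the domain, accessibility is defined via the universal closures of boxed formulas, the evidence function is read off membership of the instantiated $\term{t}{X}\Psi$-formulas, and completeness follows from the Truth Lemma applied to a saturated extension of $\Gamma\cup\{\neg\Phi\}$. The obstacle you flag at the end dissolves in this setting without invoking Lemma \ref{lm_ev} or extra saturation: since the canonical $\cE^c$ records membership of the fully instantiated formula $(\term{t}{X}\Psi)f$, the substitution condition becomes a syntactic coincidence of formulas, and the extension condition follows inside the worlds from axiom (A2) together with universal closure and instantiation over the domain variables.
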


\begin{remark}
\em
In \cite{f2011} and \cite{f2014} a stronger completeness result  for $FOLP$ is proven.
Let us formulate it. We need the following definitions.

A model $\cM=(W,R,(D_w)_{w\in W},\cI,\cE)$ is called {\it fully explanatory}, if for every formula $\Phi$, every $w\in W$ and every valuation $\nu$ 
if $\nu(FV(\Phi))\subseteq D_w$ and $(\cM,\nu),w\models \Box_X\Phi$ 
for $X= FV(\Phi)$, then
$(\cM,\nu),w\models \term{t}{X}\Phi$ for some justification term $t$.

Two formulas are {\it variable variants} if they coincide up to the choice of free variables, that is, one formula is $\Phi(x_1/y_1,\ldots,x_n/y_n)$ and another is  $\Phi(x_1/z_1,\ldots,x_n/z_n)$ where $x_1,\ldots,x_n$ are all variables of $\Phi$. A constant specification  $CS$ is {\it variant closed}, if for variable variants of axioms $\Phi$ and $\Psi$ formulas $\term{c}{\varnothing}\Phi$ and $\term{c}{\varnothing}\Psi$ either are  both in $CS$ or both are not in $CS$.

It is proved in \cite{f2011} and \cite{f2014}, that if $CS$ is variant closed and axiomatically appropriate, then the canonical model for $FOLP_{CS}$ is fully explanatory, therefore, strong completeness with respect to fully explanatory Fitting models holds for $FOLP$. 
Similar result can be proven for $\folpcs{CS}$ with variable closed axiomatically appropriate $CS$. We do not give the proof here in order to keep the length of the paper reasonable. So, we do not to assume  
$CS$ to be variant closed or axiomatically appropriate in the completeness theorem. 
\end{remark}

We prove completeness of $\folpcs{CS}$ with respect to Fitting models via the canonical model construction. 

\subsection{Maximal $\exists$--Complete Sets}

We fix a countably infinite set of individual variables $Var^+$ such that $Var\cap Var^+ = \varnothing$. 
Let $V$ be a countably infinite subset of $Var^+$ with countably infinite complement to $Var^+$. 

We consider two types of extensions of the original language $\cL$ of $\folpcs{CS}$. 
The language $\mathcal{L}(V)$ is the extension of $\cL$ in which
variables from $V$ are allowed in formulas as additional individual variables. In particular, variables from $V$ are allowed as indexes in $gen_x$ and we may quantify on them. 

\begin{definition}
\em
    A formula $\Phi$ of the language $\cL(V)$ is {\it $V$--closed} if all occurrences of variables from $V$ in $\Phi$ are free, all occurrences of variables from $Var$ in $\Phi$ are bound and variables from $V$ are not allowed as indexes in $gen_x$.
\end{definition}

By $\mathcal{L}^h (V)$ we denote the set of $\cL(V)$--formulas in which
variables from $V$ are allowed in formulas as free variables only and cannot arise as indexes of $gen_x$.  
In fact formulas of $\mathcal{L}^h(V)$ are expressions of the form $\Phi\sigma$, where $\Phi$ is an $\cL$--formula and $\sigma$ is a finite substitution of variables from $Var$ by variables from $Var\cup V$. Note that $\mathcal{L}^h (V)$ contains all $V$-closed formulas. 

For each $\mathcal{L}^h (V)$--formula $\Phi$ by $FV(\Phi)$ denote the set of variables from $Var$ free in $\Phi$ and by $FV^+(\Phi)$
the set of variables from $Var^+$ free in $\Phi$. Notation $FV^h(\Phi)$ stand for
$FV(\Phi)\cup FV^+(\Phi)$.

We use abbreviations  $\folpcs{CS}(V)$ for the system in language $\cL^h(V)$ with $\cL^h(V)$--axioms similar with those of $\folpcs{CS}$ (see Definition \ref{def_axioms}). In other words, axioms of $\folpcs{CS}(V)$ have the form $A\sigma$ where $A$ is an axiom of $\folpcs{CS}$ and $\sigma$ is a finite substitution of  variables from $Var$ by variables from $V$. Remember that variables from $V$ are always free in formulas of  $\cL^h(V)$, thus, the generalization rule $R2$ cannot be applied to variables from $V$.

Let $S$ be a set of $V$--closed formulas.

\begin{definition} \em
 $S$ is {\it $\mathcal{L}(V)$--inconsistent using $CS$} if $S\vdash_{\folpcsv{V}}\bot$, 
 otherwise $S$ is {\it $\mathcal{L}(V)$--consistent using $CS$}.
\end{definition}

\begin{definition}\em
   $S$ is {\it $\mathcal{L}^h(V)$--maximal } if $\Phi \in S$ or $\neg\Phi \in S$ for each 
   $V$-closed formula $\Phi$.
\end{definition}

\begin{definition}\em
$S$ is  {\it $\exists$--complete w.r.t. $V$} if for each negated universal formula $\neg \forall x \Phi \in S$ there is some variable $v \in V$ (called witness) s.t. 
$ \neg \Phi[x / v] \in S$.   
\end{definition}

\begin{lemma}[Extension Lemma]
\label{lm_extension}
\em
Let $V_1\subseteq V_2$ be a subsets of $Var^+$, for which $V_2\setminus V_1$ is countable. Suppose that $S$ is a set of $V_1$--closed formulas that is 
$\mathcal{L}(V_1)$--consistent using $CS$. Then there exists a set  $S^+ \supseteq S$
of $V_2$-closed formulas such that $S^+$ is a $\mathcal{L}^h(V_2)$--maximal, $\mathcal{L}(V_2)$--consistent using $CS$ and $\exists$--complete w.r.t. $V_2$. 
\end{lemma}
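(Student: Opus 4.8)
The plan is a Lindenbaum--Henkin construction: I will enumerate all $V_2$-closed formulas, decide each one while preserving $\mathcal{L}(V_2)$-consistency using $CS$, and, whenever a negated universal formula gets added, immediately supply a fresh Henkin witness for it. Two preliminaries come first. Since the members of $S$ are $V_1$-closed and $V_1\subseteq V_2$, the set $S$ is also $\mathcal{L}(V_2)$-consistent using $CS$: a $\folpcs{CS}(V_2)$-derivation of $\bot$ from $S$ uses only finitely many variables, so relettering the (necessarily free) occurrences of variables of $V_2\setminus V_1$ in it by variables of $Var$ fresh for the derivation produces a $\folpcs{CS}(V_1)$-derivation of $\bot$ from $S$ --- the $CS$-axioms are closed and hence untouched, justification terms carry no individual variables, $gen$-indices are labels rather than variable occurrences, and the generalization rule is never applied to a variable of $V_2$. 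Second, the set of $V_2$-closed formulas is countable, so fix an enumeration $\Phi_1,\Phi_2,\dots$ of it, noting that $\neg\Phi$ is $V_2$-closed whenever $\Phi$ is.

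Next I build an increasing chain $S=S_0\subseteq S_1\subseteq\cdots$ of $\mathcal{L}(V_2)$-consistent (using $CS$) sets of $V_2$-closed formulas. At stage $n$, let $S_n'=S_{n-1}\cup\{\Phi_n\}$ if this set is $\mathcal{L}(V_2)$-consistent using $CS$, and $S_n'=S_{n-1}\cup\{\neg\Phi_n\}$ otherwise; in either case $S_n'$ is consistent, since if both extensions were inconsistent then so would $S_{n-1}$ be. If moreover $\Phi_n$ has the form $\neg\forall x\Psi$ and $\Phi_n\in S_n'$, choose a variable $v_n\in V_2\setminus V_1$ occurring in none of the finitely many formulas added during stages $1,\dots,n$ --- such $v_n$ exists because $V_2\setminus V_1$ is infinite and no variable of $V_2\setminus V_1$ occurs in $S$ --- and set $S_n=S_n'\cup\{\neg\Psi[x/v_n]\}$; otherwise set $S_n=S_n'$. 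Put $S^+=\bigcup_n S_n$; then $S\subseteq S^+$ by construction.

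I then verify the four required properties. Every formula of $S^+$ is $V_2$-closed: $S$ is $V_1$-closed, each $\Phi_n$ and $\neg\Phi_n$ is $V_2$-closed, and each witness $\neg\Psi[x/v_n]$ is $V_2$-closed, because replacing the free $Var$-variable $x$ of $\Psi$ by $v_n\in V$ leaves all surviving $Var$-occurrences bound and, since substitution does not touch $gen$-indices, introduces no variable of $V$ as a $gen$-index. The set $S^+$ is $\mathcal{L}^h(V_2)$-maximal, since every $V_2$-closed $\Phi$ equals some $\Phi_n$ and one of $\Phi_n,\neg\Phi_n$ lies in $S_n\subseteq S^+$. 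It is $\mathcal{L}(V_2)$-consistent using $CS$, because each $S_n$ is and any derivation of $\bot$ from $S^+$ would use only finitely many of its formulas, all lying in some $S_n$. Finally, $S^+$ is $\exists$-complete w.r.t. $V_2$: if $\neg\forall x\Psi\in S^+$, it equals some $\Phi_n$; by consistency $\neg\Phi_n\notin S^+$, so at stage $n$ we added $\Phi_n$, and hence also the witness $\neg\Psi[x/v_n]$.

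I expect the witness-consistency step to be the main obstacle. The plan there: assume $S_n'\cup\{\neg\Psi[x/v_n]\}$ is inconsistent; then by the Deduction Theorem for $\folpcs{CS}(V_2)$ (available under the standard restrictions on necessitation and generalization in derivations from hypotheses) one gets $S_n'\vdash\Psi[x/v_n]$ in $\folpcs{CS}(V_2)$. Since $v_n$ is fresh for $S_n'$ and occurs only freely, relettering $v_n$ to a variable $y\in Var$ chosen fresh for the whole derivation turns it into a $\folpcs{CS}(V_2)$-derivation of $\Psi[x/y]$ from $S_n'$; this relettering is harmless --- in particular variant-closedness of $CS$ is not needed --- because justification terms contain no individual variables, the $CS$-axioms are closed, $gen$-indices are labels, and no generalization step is on $v_n$ or on $y$. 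Generalizing on $y$, which is legal since $y$ does not occur in the hypotheses used, yields $S_n'\vdash\forall y\,\Psi[x/y]$; but $\forall y\,\Psi[x/y]$ differs from $\forall x\Psi$ only by renaming of the bound variable $x$, so $S_n'\vdash\forall x\Psi$, contradicting $\neg\forall x\Psi\in S_n'$. The delicate points to get right are precisely that the relettering stays compatible with the prohibition of $V$-variables as $gen$-indices and with the restricted generalization rule, and that the Deduction Theorem does apply to the derivation at hand.
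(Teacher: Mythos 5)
Your proposal is correct and follows essentially the same route as the paper: a Lindenbaum--Henkin construction enumerating the $V_2$-closed formulas, adding fresh witnesses from $V_2\setminus V_1$ for negated universals, and establishing consistency of the witness step via the Deduction Theorem and freshness of the witness variable. Your extra care with the base case (why $\mathcal{L}(V_1)$-consistency transfers to $\mathcal{L}(V_2)$) and with the relettering needed to generalize despite the ban on generalizing over $V$-variables fills in details the paper leaves implicit, but does not change the argument.
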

\begin{proof}
The set of  $V_2$--closed formulas  is countable, 
    we numerate all $V_2$--closed formulas 
    as $\Phi_0, \Phi_1, \ldots$ and numerate elements of $V_2\setminus V_1$ as $v_1, v_2, \ldots$.  We define the sequence $S_0,S_1,S_2,\ldots$ of sets of $V_2$--closed formulas by recursion as follows:
\begin{enumerate}
\item
    $S_0 = S$. 
\item 
$S_{n+1}$ is 
\begin{itemize}
		\item
		$S_n \cup \{\neg \Phi_n\}$,  if $S_n \cup \{ \Phi_n \}$ 
   is   $\cL(V_2)$--inconsistent using  $CS$; 
	\item
     $S_n \cup \{ \Phi_n\}$, if $S_n \cup \{ \Phi_n \}$  
is $\cL(V_2)$--consistent using $CS$  and 
$\Phi_n$ is not of the form $\neg \forall x \Theta$; 
\item
$S_n \cup \{ \Phi_n, \neg \Theta[x/v_{i_n}] \}$,  if
$S_n \cup \{ \Phi_n \}$ is  $\cL(V_2)$--consistent using  $CS$  and 
$\Phi_n$  has the form  $\neg \forall x \Theta$  and 
             $v_{i_n}$  is the first variable in the list 
         $v_1,v_2,\ldots,$ not occurring in $S_n$ or $\Phi_n$.
				\end{itemize}
   \end{enumerate}
 Consider $S^+= \bigcup\limits_{n \in \mathbb{N}}S_n$. 
 
 It is easily seen that $S^+$ is $\mathcal{L}^h(V_2)$--maximal and $\exists$--complete w.r.t. $V_2$ by construction. Let us prove that $S_n$ is  $\mathcal{L}(V_2)$--consistent using $CS$ for each $n \in \mathbb{N}$. 
 Suppose that $S_n$ is $\mathcal{L}(V_2)$--consistent using $CS$ by induction hypothesis and $S_{n+1}$ is not. There are two possible cases. 
 
\begin{enumerate}
    \item $S_{n+1}$ is defined according to the first or the second rule. Then either $S_{n+1} = S_n \cup \{ \Phi_n\}$, where 
    $S_n \cup \{ \Phi_n \} \not \vdash_{\folpcsv{V_2}} \bot $, therefore $S_{n+1}$ is consistent by definition, or 
    $S_{n+1} = S_n \cup \{ \neg\Phi_n\}$, where 
    $S_n \cup \{ \Phi_n \} \vdash_{\folpcsv{V_2}} \bot $.
    In the later case we get $S_n \cup \{ \Phi_n\} \vdash_{\folpcsv{V_2}} \bot$ and $S_n \cup \{ \neg \Phi_n\} \vdash_{\folpcsv{V_2}} \bot$ (by the assumption that $S_{n+1}$ is inconsistent), whence $S_n$ is inconsistent, contradiction. 
    
    \item $S_{n+1} = S_n \cup \{ \Phi_n, \neg \Theta[x/v_{i_n}] \}$, where $\Phi_n$ is of the form $\neg \forall x \Theta$ and $S_n \cup \{ \Phi_n \} \not \vdash_{\folpcsv{V_2}} \bot$. If $S_{n+1}$ is inconsistent  using $CS$, then 
    $S_n\cup\{\neg \forall x \Theta, \neg \Theta[x/v_{i_n}]\} \vdash_{\folpcsv{V_2}} \bot$. Therefore, $S_n\cup\{\neg \forall x \Theta\} \vdash_{\folpcsv{V_2}} \Theta[x/v_{i_n}]$. Since $v_{i_n}$ has no occurrences in formulas of $S_n$ and $\Phi_n$ and $CS$ consists of closed formulas, we conclude $S_n\cup\{\neg \forall x \Theta\} \vdash_{\folpcsv{V_2}} \forall x \Theta$. 
Therefore $S_n \cup\{\neg\forall x \Theta\}\vdash_{\folpcsv{V_2}} \forall x \Theta\land \neg\forall x \Theta$, that is, $S_n \cup \{ \Phi_n \} \vdash_{\folpcsv{V_2}} \bot$, contradiction.
\end{enumerate}

Since  $S^+ = \bigcup\limits_{n \in \mathbb{N}}S_n$ and
$S_n$ is the increasing chain of $\mathcal{L}(V_2)$--consistent sets,  $S^+$ is $\mathcal{L}(V_2)$--consistent.
\end{proof}

The following lemma accumulates properties of maximal consistent sets of formulas and can be proven in the standard way.

\begin{lemma}
\label{lm_max_sets}
\em 
Let $S$ be a $\cL^h(V)$--maximal $\cL(V)$--consistent using $CS$ set of formulas. Then for all $V$--closed formulas $\Phi$ and $\Psi$  
\begin{enumerate}
     \item $\Phi \in S \Leftrightarrow S \vdash_{\folpcsv{V}} \Phi$,
     \item $\neg \Phi \in S \Leftrightarrow \Phi \not \in S$,
    \item $ \Phi \land \Psi \in S \Leftrightarrow \Phi \in S$ and $\Psi \in S$.
    \item
    If $S$ is also $\exists$--complete w.r.t. $V$, then $ \forall x \Phi \in S \Leftrightarrow \Phi [x/v] \in S$ for all $v \in V$.
\end{enumerate}
\end{lemma}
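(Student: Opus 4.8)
The plan is to obtain all four items from three ingredients only: $\cL(V)$--consistency using $CS$, $\cL^h(V)$--maximality, and (for item 4) $\exists$--completeness w.r.t.\ $V$, together with the fact that $\folpcsv{V}$ contains classical first-order logic. I would prove them in the stated order, since (3) and (4) use (1). A preliminary observation used throughout: whenever $\Phi,\Psi$ are $V$--closed, so are $\neg\Phi$ and $\Phi\wedge\Psi$, and whenever $\forall x\Phi$ is $V$--closed and $v\in V$, so is $\Phi[x/v]$; moreover in that last case the substitution $[x/v]$ is collision--free because a $V$--closed formula contains no quantifier binding a variable of $V$. These remarks are immediate from the clauses defining $V$--closedness and they are what make $\cL^h(V)$--maximality applicable to exactly the auxiliary formulas that arise.

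For (1), left--to--right is trivial. For the converse, if $S\vdash_{\folpcsv{V}}\Phi$ and $\Phi\notin S$ then maximality gives $\neg\Phi\in S$, so $S$ proves $\Phi\wedge\neg\Phi$, hence $\bot$, contradicting $\cL(V)$--consistency using $CS$. Item (2) is the same: $\Phi\in S$ together with $\neg\Phi\in S$ would be inconsistent, so at most one of $\Phi,\neg\Phi$ lies in $S$, and if $\Phi\notin S$ then maximality forces $\neg\Phi\in S$. Item (3) follows from (1) by propositional reasoning: if $\Phi\wedge\Psi\in S$ then $S$ proves both $\Phi$ and $\Psi$, hence $\Phi,\Psi\in S$ by (1); conversely if $\Phi,\Psi\in S$ then $S$ proves $\Phi\wedge\Psi$, hence $\Phi\wedge\Psi\in S$ by (1).

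For (4) I would use $\exists$--completeness in addition. If $\forall x\Phi\in S$, then for each $v\in V$ the instance $\forall x\Phi\to\Phi[x/v]$ of (A0) (legitimate since no quantifier of $\Phi$ binds $v$) yields $S\vdash_{\folpcsv{V}}\Phi[x/v]$, hence $\Phi[x/v]\in S$ by (1). Conversely, suppose $\Phi[x/v]\in S$ for every $v\in V$ but $\forall x\Phi\notin S$; then $\neg\forall x\Phi\in S$ by maximality, so $\exists$--completeness provides a witness $v_0\in V$ with $\neg\Phi[x/v_0]\in S$, contradicting $\Phi[x/v_0]\in S$. Since everything reduces to classical propositional and predicate reasoning plus the definitions, I do not expect a genuine obstacle; the only point needing care is the bookkeeping that keeps each auxiliary formula $V$--closed and the substitution $[x/v]$ collision--free, which, as noted, is precisely what the notion of $V$--closedness was designed to guarantee.
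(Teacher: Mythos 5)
Your proof is correct and is exactly the standard argument the paper has in mind: the paper omits the proof entirely, remarking only that the lemma "can be proven in the standard way." Your care about $V$--closedness of the auxiliary formulas and collision--freeness of $[x/v]$ is the right bookkeeping and matches the intent of the definitions.
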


\subsection{Canonical Fitting Model}

\begin{notation}
For a set of $V$--closed formulas $S$, by $S^\#$ we denote the set of formulas 
$\forall y_1\ldots \forall y_n \Phi$ where $\Box_X \Phi \in S$ for some $X$ and $\{ y_1,\ldots ,y_n\} = FV(\Phi)$.
Note that $X\subseteq V$ since $S$ consists of $V$--closed formulas 
and 
$\forall y_1\ldots \forall y_n \Phi$ where $\{ y_1,\ldots ,y_n\} = FV(\Phi)$ is a $V$--closed formula.
\end{notation}

\begin{definition}
\em
\label{can_mod}
The canonical Fitting model 
    $\mathcal{M}^c = (W^c, R^c, (D_\Gamma^c)_{\Gamma \in W^c}, I^c, \mathcal{E}^c)$ is defined as follows.
    \begin{itemize}
        \item 
$W^c$ consists of all pairs  of the form 
$\Gamma = (S, V)$  where $V \subseteq  Var^+$ 
is countable with countable complement to $Var^+$ and $S$  is a set of $V$--closed formulas, that is $\mathcal{L}^h(V)$--maximal,  $\mathcal{L}(V)$--consistent using $CS$ and  $\exists$--complete w.r.t. $V$. 
        
We use notation $Form(\Gamma) = S$ and $Var(\Gamma) = V$ for each $\Gamma=(S,V) \in W^c$. 
        \item 
        For $\Gamma,\Delta\in W^c$ we define $\Gamma R^c \Delta \leftrightharpoons $
    \begin{enumerate}
    	\item $Var(\Gamma) \subseteq Var(\Delta) $ and
    	\item $Form(\Gamma)^\#\subseteq Form(\Delta)$.
    \end{enumerate}
    \item $D^c_\Gamma = Var(\Gamma)$; 
    \item $I^c$ is a canonical interpretation, that is, for each $n$--place predicate symbol $P$, possible world $\Gamma\in W^c$ and $v_1, \dots, v_n \in Var(\Gamma)$
    we define $\langle v_1, \dots, v_n \rangle \in I^c (P, \Gamma)$ if and only if $P (v_1, \dots, v_n) \in Form(\Gamma)$; 
    
    \item $\mathcal{E}^c$ is a canonical evidence function. 
		For a justification term $t$, an $\cL^h(Var^+)$--formula $\Phi$, a finite set $X\subseteq Var\cup Var^+$ and a finite valuation $f$  of $Var\cup Var^+$ 
		in $Var^+$ by $(\term{t}{X} \Phi)f$ we denote a $\cL^h(Var^+)$--formula obtained from $(\term{t}{X} \Phi)f$ by replacing all free occurrences of variables  $x\in Dom(f)$ by $f(x)$.  
We define
    $$
    \mathcal{E}^c(t, \Phi, f) =\{ \Gamma \in W^c \mid Im(f)\subseteq Var(\Gamma)
     \mbox{ and }\ 
          (\term{t}{X} \Phi)f\in Form(\Gamma)
     \mbox{ for }X = Dom(f)\cap FV^h(\Phi)
     \}.
    $$

    \end{itemize}
\end{definition}

\begin{lemma}
\em
\label{lm_canonical_model}
Canonical model $\cM^c$ is a Fitting model meeting $CS$ for the language $\cL^h(Var^+)$.
\end{lemma}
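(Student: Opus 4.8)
The plan is to verify, clause by clause, that $\cM^c$ satisfies the definition of a Fitting model (Definition \ref{fitting_model}), read over the language $\cL^h(Var^+)$: that $(W^c,R^c)$ is an $S4$-frame with monotone domains, that $I^c$ is a legitimate interpretation, that $\cE^c$ satisfies all nine evidence-function conditions, and that $\cE^c$ meets $CS$. In every case the recipe is the same: translate the semantic requirement into a claim about membership of certain formulas in the sets $Form(\Gamma)$, and discharge that claim using the relevant axiom of $\folpcs{CS}$ — applied under a valuation, which, since every valuation in play maps into $Var^+$ and $\cL^h(Var^+)$-formulas bind only variables from $Var$, yields again an axiom with no variable clash — together with the closure properties of maximal $\exists$-complete consistent sets (Lemma \ref{lm_max_sets}) and, when a fresh world is required, the Extension Lemma \ref{lm_extension}.

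For the frame: $W^c\neq\varnothing$ because Lemma \ref{lm_extension} applied to the empty consistent set produces a world. Reflexivity of $R^c$: the requirement $Var(\Gamma)\subseteq Var(\Gamma)$ is trivial, and $Form(\Gamma)^\#\subseteq Form(\Gamma)$ follows since from $\Box_X\Phi\in Form(\Gamma)$ axiom (A3$'$) yields $\Phi$ and generalization — legitimate, as the variables to be quantified are not free in $\Box_X\Phi$ — yields the $\#$-closure of $\Phi$, which lies in $Form(\Gamma)$ by Lemma \ref{lm_max_sets}(1). Transitivity: given $\Gamma R^c\Delta R^c\Xi$, the inclusion $Var(\Gamma)\subseteq Var(\Xi)$ is immediate, and for $Form(\Gamma)^\#\subseteq Form(\Xi)$ one uses (A6$'$) — from $\Box_X\Phi\in Form(\Gamma)$ derive $\Box_X\Box_X\Phi\in Form(\Gamma)$, push the inner $\Box_X\Phi$ into $Form(\Delta)$ along $R^c$, and then push the $\#$-closure of $\Phi$ into $Form(\Xi)$. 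Monotonicity of domains is exactly clause~1 of the definition of $R^c$ together with $D^c_\Gamma=Var(\Gamma)$, and $I^c(P,\Gamma)\subseteq (D^c_\Gamma)^n$ is built into the definition of $I^c$.

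The substantive part is the evidence function. Adequacy is immediate from the definition of $\cE^c$ (membership requires $Im(f)\subseteq Var(\Gamma)=D^c_\Gamma$). For the other conditions one unwinds $\Gamma\in\cE^c(t,\Phi,f)$ to $(\term{t}{X}\Phi)f\in Form(\Gamma)$ with $X=Dom(f)\cap FV^h(\Phi)$, and then: restriction and extension follow from (A1) and (A2) together with the observation that $\term{t}{X}\Phi$ has the same substitution instance under $f$ and under $f\upharpoonright FV^h(\Phi)$; the $\cdot$- and $+$-conditions follow from (A4) and (A5); the $!$-condition follows from (A2) composed with (A6), the hypothesis $Dom(f)\cap FV^h(\Phi)\subseteq X\subseteq Dom(g)$ being exactly what matches the index of $(\term{!t}{X}\term{t}{X}\Phi)g$; the $gen_x$-condition follows from (A7), whose side condition $x\notin X$ coincides with the hypothesis $x\notin Dom(f)\cap FV^h(\Phi)$; $R$-closure follows from the provable equivalence $\term{t}{X}\Phi\leftrightarrow\Box_X\term{t}{X}\Phi$ (listed in the Example after Definition \ref{def_axioms}) together with the definition of $R^c$; and the substitution condition follows from the fact that $\Phi\sigma$ under $f$ and $\Phi$ under $f\circ\sigma$ denote the same $\cL^h(Var^+)$-formula, with the Substitution Lemma \ref{lm_subst} handling the syntactic side. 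Finally $\cE^c$ meets $CS$: if $\term{c}{\varnothing}\Phi\in CS$ then $\term{c}{\varnothing}\Phi$ is a closed axiom of $\folpcs{CS}(V)$ for every $V$, hence lies in every $Form(\Gamma)$ by Lemma \ref{lm_max_sets}(1), and since there $X=\varnothing$ and $Im(\varnothing)\subseteq Var(\Gamma)$ trivially, $\cE^c(c,\Phi,\varnothing)=W^c$.

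I expect the main obstacle to be the evidence-function conditions, and within them the bookkeeping of free-variable sets and indices: one must repeatedly check that the index $X=Dom(f)\cap FV^h(\Phi)$ selected by the definition of $\cE^c$, once a valuation is applied, coincides with the index occurring in the conclusion of the invoked axiom. The two most delicate cases are the $!$-condition, where two indices $Dom(f)\cap FV^h(\Phi)\subseteq X$ and two nested valuations $f\subseteq g$ must be reconciled simultaneously, and the substitution condition, where one tracks the interaction of $\sigma$ with the restriction $f\upharpoonright FV^h(\Phi\sigma)$ and must verify that $f\circ\sigma$ and $(f\upharpoonright FV^h(\Phi\sigma))\circ\sigma$ agree. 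The frame conditions and the $CS$-clause are routine.
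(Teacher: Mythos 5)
Your proposal is correct and follows essentially the same route as the paper: a clause-by-clause verification of Definition \ref{fitting_model}, translating each frame/evidence condition into membership of suitable substitution instances in $Form(\Gamma)$ and discharging it with the matching axiom ((A1)/(A2) for restriction and extension, (A4)/(A5) for $\cdot$ and $+$, (A6)+(A8) for $R$-closure, (A6)/(A2) for $!$, (A7) for $gen_x$), Lemma \ref{lm_max_sets}, and the syntactic coincidence of instances for the substitution condition and the $CS$ clause. Your small local variants (reflexivity via (A3$'$) plus generalization, and handling the extension condition by substituted axiom instances rather than the paper's generalize-then-instantiate step) are legitimate because the hypotheses in $Form(\Gamma)$ are $V$-closed and the substituted values lie in $Var^+$, so they amount to the same derivations.
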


\begin{proof}
Let us check that $\cM^c$ satisfies all the conditions of Definition \ref{fitting_model}.
\begin{itemize}
\item  $R^c$ is reflexive. 

Suppose $\Box_X\Phi \in Form(\Gamma)$ for $\Gamma\in W^c$ (in particular, it means that $X\subseteq Var(\Gamma)\subseteq Var^+$). Since for $\{y_1, \ldots, y_n \} = FV(\Phi)$,
formulas $\Box_X \Phi \to \Box_X \forall y_1\ldots \forall y_n \Phi$ 
 and $\Box_X \forall y_1\ldots \forall y_n \Phi \to \forall y_1\ldots \forall y_n \Phi$ are derivable in   $\folpcs{0}(Var(\Gamma))$, we conclude that the $Var(\Gamma)$--closed formula
$\forall y_1\ldots \forall y_n \Phi$ is derivable from $Form(\Gamma)$, therefore by properties of maximal sets 
(see Lemma \ref{lm_max_sets})
  $\forall y_1\ldots \forall y_n \Phi \in Form(\Gamma)$. Thus  $\Gamma R^c \Gamma$ for all $\Gamma \in W^c$. 

\item $R^c$ is transitive. 

Assume that $\Gamma R^c \Delta$ and $\Delta R^c \Omega$. By definition of $R^c$, $Var(\Gamma) \subseteq Var(\Delta)$ and $Var(\Delta) \subseteq Var(\Omega)$, thus $Var(\Gamma) \subseteq Var(\Omega)$. Suppose that $\Box_X\Phi \in Form(\Gamma)$. Let us prove that $\forall y_1\ldots \forall y_n \Phi \in Form(\Omega)$ 
where $\{y_1,\ldots, y_n \} = FV(\Phi)$.
By Lemma \ref{lm_max_sets}, $\Box_X \Box_X \Phi \in Form(\Gamma)$ because $\Box_X \Box_X \Phi$ is derivable from $Form(\Gamma)$ using axiom (A6$'$). Hence, $\Box_X \Phi \in Form(\Delta)$ as $\Gamma R^c \Delta$. Then $\forall y_1\ldots \forall y_n \Phi \in Form(\Omega)$ since $\Delta R^c \Omega$. 
    
\item $\mathcal{M}^c$ has monotonic domains by definition of $R^c$. 

\item $\cE^c$ satisfies the adequacy condition. 

If $\Gamma\in\cE^c(t,\Phi,f)$, then $Im(f) \subseteq Var(\Gamma)$ by definition of $\cE^c$.

\item 
$\mathcal{E}^c$ satisfies the substitution condition. 

Suppose that $\Phi$ is a $\cL^h(V)$--formula, $\sigma$ is a substitution of its free variables (i.e., a mapping from  $FV^h(\Phi)$ to $Var$).  By definition of canonical evidence function,  
$\Gamma \in \mathcal{E}^c(t,\Phi\sigma,f\upharpoonright FV^h(\Phi\sigma))$  is equivalent to 
\begin{equation}
\label{eq300}
Im(f\upharpoonright FV^h(\Phi\sigma))\subseteq D^c_\Gamma  
\mbox{ and }
(\term{t}{X} \Phi \sigma)f \in Form(\Gamma) \mbox{ for } X= Dom(f)\cap FV^h(\Phi\sigma). 
\end{equation}
Since $FV^h(\Phi\sigma)= \sigma(FV^h(\Phi))$, 
for $Y = Dom(f \circ \sigma)$  we have 
$f(X)=(f\circ\sigma)(Y)$. Then $(\term{t}{Y} \Phi)(f \circ \sigma)$ coincides with $(\term{t\!}{X} \Phi\sigma)(f\upharpoonright FV(^h\Phi\sigma))$. Therefore \eqref{eq300} is equivalent to 
\begin{equation}
Im(f\circ\sigma )\subseteq D^c_\Gamma  
\mbox{ and }
(\term{t\!}{Y} \Phi)(f \circ \sigma) \in Form(\Gamma)
\end{equation}
that is, $\Gamma \in \mathcal{E}^c(t,\Phi,f\circ \sigma)$.

\item 
 $\mathcal{E}^c$ satisfies $R$ closure condition. 
 
 Suppose $\Gamma R^c \Delta$ and $ \Gamma \in\mathcal{E}^c(t,\Phi,f)$. 
From definition of $\mathcal{E}^c$ it follows that 
$Im(f)\subseteq Var(\Gamma)$ and
 $(\term{t \!}{X} \Phi)f \in Form(\Gamma)$ for $X= Dom(f)\cap FV^h(\Phi)$. Note that 
$(\term{t \!}{X} \Phi)f$ is a $Var(\Gamma)$--closed formula, therefore 
 $(\term{!t}{X} \term{t}{X} \Phi)f$ and 
$(\Box_{X} \term{t}{X} \Phi)f$ are $Var(\Gamma)$--closed too.
Then by Lemma \ref{lm_max_sets}, 
 $$
 \begin{array}{c}
 (\term{!t}{X} \term{t}{X} \Phi)f \in Form(\Gamma)   \text{ (use axiom A6), whence } \\
 (\Box_X \term{t}{X}\Phi)f \in Form(\Gamma) \text{ (use axiom A8).}\\
 \end{array}
 $$
By definition of $R^c$ we have $Im(f)\subseteq Var(\Delta)$ and from $FV((\term{t}{X}\Phi)f)=\varnothing$  we conclude that 
$(\term{t}{X} \Phi)f \in Form(\Delta)$. 
Thus, $\Delta \in \mathcal{E}^c(t,\Phi, f)$. 
 
\item
 $\mathcal{E}^c$ satisfies the extension condition. 
 
 Assume that $f$ and $g$ are  functions from finite subsets of $Var\cup Var^+$ to $Var^+$. Suppose that $\Gamma \in \mathcal{E}^c(t,\Phi,f)$ and $g\in ext(f, Var(\Gamma))$. 
Then for  $X=Dom(f)\cap FV^h(\Phi)$ we  have $(\term{t}{X} \Phi)f \in Form(\Gamma)$. Note that by definition of $\cL^h$--formulas $\Phi$ does not contain variables from $Var^+$ which do not belong to X. Since $g$ is an extension of $f$, we have $X\subseteq Dom (g)\cap FV^h(\Phi)$. Note that 
$$
\begin{array}{c}
Y=\{y_1,\ldots,y_k\} = (Dom (g)\cap FV^h(\Phi))\setminus X =(Dom (g)\cap FV(\Phi))\setminus X.
\end{array}
$$ 
Then $\Phi g$
coincides with $(\Phi f)(g\upharpoonright Y)$. 
Note that
 $\forall y_1\ldots \forall y_k ((\term{t}{X} \Phi)f\to (\term{t}{X\cup Y} \Phi)f)$ is a $Var(\Gamma)$--closed formula derivable in  $\folpcs{0}(Var(\Gamma))$, therefore $(\forall y_1\ldots \forall y_k \term{t}{X\cup Y} \Phi)f$ is derivable from $Form(\Gamma)$ in $\folpcs{0}(Var(\Gamma))$ whence 
  by Lemma \ref{lm_max_sets} $(\forall y_1\ldots \forall y_k \term{t}{X\cup Y} \Phi)f$ belongs to $Form(\Gamma)$. Taking into account that  $Im(g)\subseteq Var(\Gamma)$ we conclude that 
  $((\term{t}{X\cup Y} \Phi)f)(g \upharpoonright Y)$ belongs to $Form(\Gamma)$. 
Since $f$ and $g$ coincide on $X$, $((\term{t}{X\cup Y} \Phi)f)(g \upharpoonright Y)$ coincides with $(\term{t}{X\cup Y} \Phi)g$, therefore  
 $(\term{t}{X\cup Y} \Phi)g\in Form(\Gamma)$ whence $\Gamma \in \mathcal{E}^c(t,\Phi,g)$. 

\item 
$\mathcal{E}^c$ meets the restriction condition. 

 Suppose that $\Gamma \in \mathcal{E}^c(t,\Phi,f)$. Then $(\term{t}{X} \Phi)f\in Form(\Gamma)$ for $X= Dom(f)\cap FV^h(\Phi)$. 
 Since formulas  $(\term{t}{X} \Phi)f$ and $(\term{t}{X} \Phi)(f\!\upharpoonright\! FV^h(\Phi))$ coincide and $Dom(f\upharpoonright FV^h(\Phi))=
X$, we obtain $\Gamma \in \mathcal{E}^c(t,\Phi,f\upharpoonright FV^h(\Phi))$.

\item $\mathcal{E}^c$ meets $\cdot$ condition.

Suppose that  $\Gamma \in \cE^c(t, \Phi \to \Psi , f)$ and $\Gamma \in \mathcal{E}^c(s, \Phi, f)$. By definition of $\cE^c$, one has $Im(f) \subseteq Var(\Gamma)$ and 
\begin{eqnarray}
    (\term{t}{X} (\Phi \to \Psi))f \in Form(\Gamma) \mbox{ for } X = Dom(f) \cap FV^h(\Phi \to \Psi) \mbox{ and }\\
    (\term{s}{Y} \Phi)f \in Form(\Gamma) \mbox{ for } Y = Dom(f) \cap FV^h(\Phi).
\end{eqnarray} 
Since $Y\subseteq X$ and $Im(f) \subseteq Var(\Gamma)$, 
by Lemma \ref{lm_max_sets} and   axiom (A2) we have $(\term{s}{X} \Phi)f \in Form(\Gamma)$. Hence $(\term{t\cdot s}{X} \Psi)f \in Form(\Gamma)$ by Lemma \ref{lm_max_sets} and axiom (A4). 
Since $X\supseteq Dom(f)\cap FV^h(\Psi)$ by axiom (A1) we get $(\term{t\cdot s}{Z} \Psi)f \in Form(\Gamma)$ for $Z= FV^h(\Psi)$, therefore, $\Gamma \in \cE^c(t \cdot s, \Psi, f)$.

\item 
$\mathcal{E}^c$ meets $+$ condition. 

Suppose that  $\Gamma \in \mathcal{E}^c(t, \Phi , f)$, then $(\term{t}{X} \Phi)f \in Form(\Gamma)$ for $X = Dom(f)\cap FV^h(\Phi)$. By Lemma \ref{lm_max_sets} and axiom (A5),  $\{ (\term{t + s}{X} \Phi)f, (\term{s + t}{X} \Phi)f \} \subseteq Form(\Gamma)$, therefore, $\Gamma \in \mathcal{E}^c(t+s, \Phi , f)$ and $\Gamma \in \mathcal{E}^c(s+t, \Phi , f)$.

\item 
$\mathcal{E}^c$ meets $!$ condition. 

Assume that $\Gamma \in \mathcal{E}^c(t, \Phi, f)$, $g\in ext(f,D_\Gamma)$ and $FV^h(\Phi)\cap Dom(f)\subseteq X\subseteq Dom(g)$ for  $X\subseteq Var\cup Var^+$. 
By definition of $\cE^c$  we have $(\term{t}{X} \Phi)f \in Form(\Gamma)$ for $X = Dom(f)\cap FV^h(\Phi)$ and $Im(f)\subseteq Var(\Gamma)$.  Note that 
$(\term{t}{X} \Phi)f$ is $Var(\Gamma)$-closed, hence $(\term{!t}{X} \term{t}{X} \Phi)f$  is $Var(\Gamma)$-closed too.
Assume that $g\in ext(f, Var(\Gamma))$ and $FV(\Phi)\cap Dom(f)\subseteq Y\subseteq Dom(g)$, then  $(\term{t}{X} \Phi)g$ is $V$--closed. 
By Lemma \ref{lm_max_sets} using axioms (A6) and (A2) we consequently obtain  $(\term{t}{Y} \Phi)g \in Form(\Gamma)$ and
$(\term{!t}{Y} \term{t}{Y} \Phi)g \in Form(\Gamma)$. Since $Y$ coincide with $Dom(g)\cap FV^h(\term{t}{Y} \Phi)$, we conclude
$\Gamma\in\cE^c(!t,\term{t}{Y}\Phi,g)$. 

\item $\mathcal{E}^c$ meets $gen_x$ condition.   

Note that in the language $\cL^h(Var^+)$ variables from $Var^+$ are not allowed as indexes in $gen_x$, so $x\in Var$. Suppose that $\Gamma \in \mathcal{E}^c(t,\Phi,f)$.  
Take $X = Dom(f)\cap FV^h(\Phi)$, then $(\term{t}{X}\Phi)f \in Form(\Gamma)$ and $Im(f)\subseteq Var(\Gamma)$. 
We have to show that if $x\not\in X$ then $(\term{gen_x(t)}{Y}\forall x \Phi)f \in Form(\Gamma)$
for $Y=Dom(f)\cap FV^h(\forall x\Phi)=(Dom(f)\cap FV^h(\Phi))\setminus\{x\}=X$.
Since $x \not\in f(X)\subseteq Var^+$, by Lemma \ref{lm_max_sets} and axiom (A7) we obtain $\term{(gen_x(t)\!}{X \!} \forall x \Phi)f \in Form(\Gamma)$. 
Thus $ \Gamma \in \mathcal{E}^c(gen_x(t),\forall x\Phi, f)$. 

\item $\mathcal{E}^c$ meets $CS$ condition.   

Suppose $\term{c}{\varnothing}\Phi$ is $CS$. Since it is a closed formula (not containing any free variables either from $Var$ or from $Var^+$), it is $V$-closed for each $V$.
Since $\term{c}{\varnothing}\Phi$ is derivable in $\folpcs{CS}$, by Lemma \ref{lm_max_sets} we have $\term{c}{\varnothing} \Phi\in Form(\Gamma)$ for each $\Gamma\in W^c$, that is, for $f=\varnothing$ and
$X=Dom(f)\cap FV(\Phi)=\varnothing$ we have $Im(f)\subseteq Var(\Gamma)$ and $(\term{c}{X}\Phi)f\in Form(\Gamma)$, whence $\Gamma\in\cE^c(c,\Phi,\varnothing)$. This implies
$\cE^c(c,\Phi,\varnothing)= W^c$.
\end{itemize}
\end{proof}

\subsection{The Truth Lemma}

\begin{definition}
\em  A mapping $\nu$ from $Var\cup Var^+$ to $Var^+$ is called a {\it canonical valuation} if $\nu^c(x)=x$ for every $x \in Var^+$.
\end{definition}

\begin{lemma} 
\label{lm_truth}
\em
For the canonical model $\mathcal{M}^c$, any canonical valuation $\nu^c$, each $\Gamma \in W^c$ and each $Var(\Gamma)$--closed formula $\Phi$   
$$(\mathcal{M}^c,\nu^c), \Gamma \Vdash \Phi \Leftrightarrow \Phi \in Form(\Gamma).$$
\end{lemma}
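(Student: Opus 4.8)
The plan is to prove the equivalence by induction on the structure of $\Phi$. A preliminary remark: subformulas of a $Var(\Gamma)$--closed formula need not themselves be $Var(\Gamma)$--closed, so in the inductive steps I would apply the induction hypothesis to a structurally simpler formula after a renaming of its free individual variables, and sometimes at a world $\Delta$ other than $\Gamma$; this is legitimate because renamings do not raise formula complexity and the assertion is quantified over all worlds and all closed formulas of the corresponding languages. Throughout I would use that $\nu^c$ is the identity on $Var^+$ and that $D^c_\Gamma = Var(\Gamma)$. The atomic case is immediate from the definition of $I^c$: $(\cM^c,\nu^c),\Gamma\Vdash P(v_1,\dots,v_n)$ unfolds to $\langle v_1,\dots,v_n\rangle\in I^c(P,\Gamma)$, i.e.\ to $P(v_1,\dots,v_n)\in Form(\Gamma)$. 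For $\neg$ and $\land$ I would use the induction hypothesis and Lemma~\ref{lm_max_sets}, noting that the ``lives in the world'' side-condition in the clause for $\neg$ is automatic since $\Phi$ is $Var(\Gamma)$--closed. For $\Phi=\forall x\Psi$ (with $x\in Var$, since variables of $Var^+$ are never quantified in $\cL^h$) I would unfold the truth clause, rewrite $(\cM^c,(\nu^c)^x_v),\Gamma\Vdash\Psi$ as $(\cM^c,\nu^c),\Gamma\Vdash\Psi[x/v]$ by Lemma~\ref{lm_models}(2), apply the induction hypothesis to the $Var(\Gamma)$--closed formula $\Psi[x/v]$, and conclude via $\exists$--completeness of $\Gamma$ (Lemma~\ref{lm_max_sets}(4)).

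The modal case $\Phi=\Box_X\Psi$ is the substantial one. Since the index consists of free occurrences, $X\subseteq Var(\Gamma)=D^c_\Gamma$, so the side-condition $\nu^c(X)\subseteq D^c_\Gamma$ holds automatically; let $y_1,\dots,y_n$ be the variables bound by $\Box_X$, so that $\forall y_1\dots\forall y_n\Psi$ is the element of $Form(\Gamma)^\#$ associated with $\Box_X\Psi$. For the forward implication (membership implies truth): from $\Box_X\Psi\in Form(\Gamma)$ I obtain $\forall y_1\dots\forall y_n\Psi\in Form(\Gamma)^\#\subseteq Form(\Delta)$ for every $\Delta$ with $\Gamma R^c\Delta$; $\exists$--completeness of $\Delta$ then yields $\Psi[y_1/d_1,\dots,y_n/d_n]\in Form(\Delta)$ for all $d_1,\dots,d_n\in Var(\Delta)=D^c_\Delta$, and the induction hypothesis together with Lemma~\ref{lm_models}(2) gives the remaining truth condition. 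For the converse I would argue contrapositively. Assuming $\Box_X\Psi\notin Form(\Gamma)$, I first show that $Form(\Gamma)^\#\cup\{\neg\forall y_1\dots\forall y_n\Psi\}$ is $\cL(Var(\Gamma))$--consistent using $CS$: otherwise $\forall y_1\dots\forall y_n\Psi$ follows from finitely many formulas $\forall z_1^i\dots z_{k_i}^i\Theta_i\in Form(\Gamma)^\#$ with $\Box_{X_i}\Theta_i\in Form(\Gamma)$, and lifting each back under its box via the derivable equivalences $\Box_{X_i}\Theta_i\leftrightarrow\Box_{X_i}\forall z\,\Theta_i$ ($z\notin X_i$) from the Remark after Definition~\ref{def_axioms}, raising all indices to a common set with (A2$'$), distributing the box over the derivation using necessitation and (A4$'$), and finally lowering the index back to $X$ with (A1$'$)/(A2$'$) (the matrix has no free variable in the index), one would derive $\Box_X\Psi\in Form(\Gamma)$ by Lemma~\ref{lm_max_sets}, a contradiction. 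Then the Extension Lemma~\ref{lm_extension} supplies $\Delta\in W^c$ with $\Gamma R^c\Delta$ and $\neg\forall y_1\dots\forall y_n\Psi\in Form(\Delta)$; $\exists$--completeness of $\Delta$ gives $d_i\in Var(\Delta)=D^c_\Delta$ with $\neg\Psi[y_1/d_1,\dots,y_n/d_n]\in Form(\Delta)$, this last formula is $Var(\Delta)$--closed (all free $Var$--variables have been renamed into $Var(\Delta)$), so the induction hypothesis and Lemma~\ref{lm_models}(2) give $(\cM^c,(\nu^c)^{y_1,\dots,y_n}_{d_1,\dots,d_n}),\Delta\not\Vdash\Psi$, hence $\Box_X\Psi$ is false at $(\cM^c,\nu^c),\Gamma$.

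For $\Phi=\term{t}{X}\Psi$ the first two clauses of the truth condition coincide with those for $\Box_X\Psi$ and are handled exactly as above, using that $\term{t}{X}\Psi\in Form(\Gamma)$ forces $\Box_X\Psi\in Form(\Gamma)$ by axiom (A8). For the third clause I would unfold the definition of $\cE^c$: since $\nu^c$ is the identity, the relevant instance formula is literally $\term{t}{X'}\Psi$ with $X'=X\cap FV^h(\Psi)$, so $\Gamma\in\cE^c(t,\Psi,\nu^c\!\upharpoonright\!(FV(\Psi)\cap X))$ says precisely that $\term{t}{X'}\Psi\in Form(\Gamma)$, which by axioms (A1), (A2) and Lemma~\ref{lm_max_sets} is equivalent to $\term{t}{X}\Psi\in Form(\Gamma)$; conjoining the three equivalences gives the claim. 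The hard part, I expect, will be the consistency step in the converse direction of the modal case (and, through (A8), of the justification case) --- reconstructing $\Box_X\Psi$ inside $Form(\Gamma)$ from a derivation of $\forall y_1\dots\forall y_n\Psi$ out of $Form(\Gamma)^\#$. That is exactly where the binding structure must be tracked with care: one must control which individual variables enter or leave the modal index along the derivation (using (A1$'$), (A2$'$), (A7$'$) and the equivalence $\Box_X\forall y\,\Phi\leftrightarrow\Box_X\Phi$), and keep straight that the members of $Form(\Gamma)^\#$ are the universal closures, over the variables bound by the modality, of the matrices occurring under $\Box$ in $Form(\Gamma)$.
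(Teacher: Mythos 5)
Your proposal is correct and takes essentially the same route as the paper's proof: induction on the formula under the identity canonical valuation, the $\forall$ case via Lemma~\ref{lm_models}(2) and $\exists$--completeness, the $\Box_X$ case via consistency of $Form(\Gamma)^\#$ together with the Extension Lemma and the index-manipulating derivation using (A1$'$), (A2$'$), (A4$'$) and $\Box_X\Phi\leftrightarrow\Box_X\forall y\,\Phi$, and the $\term{t}{X}\Psi$ case via (A8) plus unpacking $\cE^c$ with (A1)/(A2). The only deviation is cosmetic: you add $\neg\forall y_1\ldots\forall y_n\Psi$ and pull witnesses out afterwards by $\exists$--completeness of the new world, whereas the paper adds $\neg\Psi[y_1/v_1,\ldots,y_m/v_m]$ with fresh variables from $Var^+\setminus Var(\Gamma)$ and uses Bernays' rule inside the consistency derivation.
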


\begin{remark}
\em
Since all formulas $\Phi$ considered in the Truth Lemma are $Var(\Gamma)$--closed and $Form(\Gamma)$ is $Var(\Gamma)$--maximal, we have $FV(\Phi)=\varnothing$, $\nu(FV^+(\Phi))\subseteq Var(\Gamma)$  and either 
$\Phi\in Form(\Gamma)$ or $\neg\Phi\in Form(\Gamma)$, therefore either
$(\mathcal{M}^c,\nu^c), \Gamma \Vdash \Phi$ or $(\mathcal{M}^c,\nu^c), \Gamma \Vdash \neg\Phi$. 
\end{remark}

\begin{proof} 
The proof is by induction on formula $\Phi$. \\  
\begin{itemize}
    \item $\Phi = P(x_1, \dots, x_n)$ where $\{ x_1,\ldots,x_n \} \subseteq Var(\Gamma)$. 
    $$
    \begin{array}{lll}
    & (\mathcal{M}^c, \nu^c), \Gamma \Vdash P(x_1, \dots, x_n)  &
    \\
    \Leftrightarrow & \langle \nu^c(x_1), \dots, \nu^c(x_n) \rangle \in I^c(P, \Gamma) & \text{(by definition of truth)}   \\  
    \Leftrightarrow & \langle x_1, \dots, x_n \rangle \in I^c(P, \Gamma) &    
    \text{ (by definition of } \nu^c) \\
    \Leftrightarrow & 
    P(x_1, \dots, x_n) \in Form(\Gamma) &  \text{ (by definition of }I^c)
    \\
 \end{array}
 $$
   
 \item $\Phi = \neg \Psi$.   
$$
\begin{array}{lll}
& (\mathcal{M}^c, \nu^c), \Gamma \Vdash \neg \Psi &  
\\ 
\Leftrightarrow & 
\nu^c(FV^h(\Psi))\subseteq Var(\Gamma) \mbox{ and }(\mathcal{M}^c, \nu^c), \Gamma \not \Vdash \Psi   &
\text{ (by definition of truth)}\\
\Leftrightarrow & \nu^c(FV^h(\Psi))\subseteq Var(\Gamma) \mbox{ and } \Psi \not \in Form(\Gamma)   &
\text{ (by induction hypothesis) }\\
\Leftrightarrow  &
\neg \Psi \in Form(\Gamma). & 
\text{ (by Lemma \ref{lm_max_sets})}\\
\end{array}
$$

 \item  $\Phi = \Psi \land \Theta$.  
 
$$
 \begin{array}{lll}
& (\mathcal{M}^c, \nu^c), \Gamma \Vdash \Psi \land \Theta  & 
\\ 
\Leftrightarrow & (\mathcal{M}^c, \nu^c), \Gamma \Vdash \Psi \text{ and } (\mathcal{M}^c, \nu^c), \Gamma \Vdash \Theta  &
\text{ (by definition of truth)}\\
\Leftrightarrow &  \Psi \in Form(\Gamma) \text{ and } \Theta \in Form(\Gamma) & 
\text{ (by induction hypothesis) }\\
\Leftrightarrow &\Psi \land \Theta \in Form(\Gamma).
 &  \text{ (by Lemma \ref{lm_max_sets})} \\
\end{array}
$$

 \item $\Phi = \forall x \Psi$. 
 
Suppose that $\forall x \Psi \in Form(\Gamma)$. By definition of the language $\cL^h(Var^+)$ we have $x\in Var$.  Then by Lemma \ref{lm_max_sets}
 $\Psi [x/v] \in Form(\Gamma)$ for all $v \in Var(\Gamma)$. Note that if $\nu^c$ is canonical then 
$(\nu^c)^x_v$ is canonical too, so by induction hypothesis for all $v \in Var(\Gamma) \ (\mathcal{M}^c, (\nu^c)^x_v), \Gamma \Vdash  \Psi$  whence $(\mathcal{M}^c, \nu^c), \Gamma \Vdash  \forall x \Psi$. 
 
 If $\forall x \Psi \not\in Form(\Gamma)$ then 
 $\neg\Psi [x/v] \in Form(\Gamma)$ for some $v \in Var(\Gamma)$ by $\exists$-completeness of $Form(\Gamma)$. By induction hypothesis $(\mathcal{M}^c, \nu^c), \Gamma \not\Vdash  \Psi[x/v]$  whence $(\mathcal{M}^c, \nu^c), \Gamma \not\Vdash  \forall x \Psi$. 

 \item $\Phi = \Box_X \Psi$.  
 
 $(\Leftarrow)$ 

Suppose that $\Box_X \Psi \in Form(\Gamma)$. 
By definition of $W^c$ formula
$\Box_X \Psi$ is $Var(\Gamma)$--closed, therefore 
$X\subseteq Var(\Gamma)=D^c_\Gamma$, so $\nu^c(X)=X\subseteq D^c_\Gamma$.
Assume that $\Delta \in W^c$ and $\Gamma R^c\Delta$. By definition of $R^c$ we have $\forall y_1 \ldots \forall y_n \Psi \in Form(\Delta)$ 
  where $\{y_1, \ldots, y_n \} = FV(\Psi)$. 
  Take arbitrary $v_1, \ldots, v_n \in Var(\Delta)$, then by first-order logic $\Psi[y_1/v_1, \ldots, y_n/v_n] \in Form(\Delta)$.  Then, by induction hypothesis, $(\mathcal{M}^c, \nu^c), \Delta \Vdash \Psi[y_1/v_1, \ldots, y_n/v_n]$. Therefore  $(\mathcal{M}^c, (\nu^c)^{y_1, \ldots, y_n}_{v_1, \ldots, v_n}), \Delta \Vdash \Psi$
  for all $\Delta \in R^c(\Gamma)$ and all $v_1, \ldots, v_n \in Var(\Delta)$, hence  $(\mathcal{M}^c, \nu^c), \Gamma \Vdash \Box_X \Psi$. 
 
 $(\Rightarrow)$ 
 
Assume that  $\Box_X \Psi$ is a $Var(\Gamma)$--closed formula and  $\Box_X \Psi\not \in Form(\Gamma)$. Then $X\subseteq Var(\Gamma)$ and $X$ does not contain variables from $Var$. Also 
	 $\neg\Box_X \Psi\in Form(\Gamma)$ since $Form(\Gamma)$ is $\cL^h(Var(\Gamma))$--maximal.
Take
 $$
 \Gamma^\#  := \{ \forall y_1 \ldots \forall y_n \Theta\ \mid\
\exists Z\subsetneq Var(\Gamma) (\Box_Z \Theta \in Form(\Gamma)) \text{ and } \{y_1, \ldots, y_n\} = FV(\Theta) \}
 $$
Take $S =  \Gamma^\# \cup \{\neg \Psi[y_1/v_1, \ldots, y_m/v_m] \},$ 
where $Y = \{y_1, \ldots, y_m \}=FV(\Psi)$ and $v_1, \ldots, v_m$ are distinct variables from $Var^+\setminus Var(\Gamma)$. Let us show that  $S$ is consistent. For contradiction assume that it is not, then there is a finite set of formulas $\{\Box_{Z_i}\Theta_i\ |\ i=1,\ldots,n\}$ from 
$Form(\Gamma)$ such that
\begin{equation}
\forall \overrightarrow{Y_1} \Theta_1, \dots ,\forall \overrightarrow{Y_n} \Theta_n, \neg\Psi [y_1/v_1, \ldots, y_m/v_m]\vdash\bot
\mbox{ where } 
\overrightarrow{Y_i}=FV(\Theta_i).
\end{equation}
 Using first-order logic one consequently derives in $\folpcs{0}(Var(\Gamma) \cup \{ v_1, \ldots, v_m  \})$
$$
\begin{array}{ll}
 (\forall \overrightarrow{Y_1} \Theta_1  \wedge \dots \wedge \forall \overrightarrow{Y_n} \Theta_n ) \to \Psi [y_1/v_1, \ldots, y_n/v_n] & 
\text{ by deduction theorem};\\ 
 (\forall \overrightarrow{Y_1} \Theta_1 \wedge \dots \wedge \forall \overrightarrow{Y_n} \Theta_n) \to \forall \overrightarrow{Y} \Psi  & \text{ by Bernays' rule}\\
\Box_\varnothing (\forall \overrightarrow{Y_1} \Theta_1 \wedge \dots \wedge \forall \overrightarrow{Y_n} \Theta_n \to \forall \overrightarrow{Y} \Psi ) & \text{ by necessitation rule}\\
\end{array}
$$
 Note that $\forall \overrightarrow{Y_i} \Theta_i$ for $i=1,\ldots,n$ are $Var(\Gamma)$--closed formulas and   
$FV^+(\forall \overrightarrow{Y_i} \Theta_i)=FV^+(\Theta_i)$. Put 
$X_i=FV^+(\Theta_i)$ and $\widetilde{X} = X\cup\bigcup\limits^n_{i=1} X_i$.
Then we continue derivation as follows
$$
\begin{array}{ll}
 \Box_{\widetilde{X}} (\forall \overrightarrow{Y_1} \Theta_1 \wedge \dots \wedge \forall \overrightarrow{Y_n} \Theta_n \to \forall \overrightarrow{Y} \Psi ) & \text{ by (A2$'$)}\\
(\Box_{\widetilde{X}} \forall \overrightarrow{Y_1} \Theta_1 \wedge \dots \wedge \Box_{\widetilde{X}} \forall \overrightarrow{Y_n} \Theta_n) \to \Box_{\widetilde{X}} \forall \overrightarrow{Y} \Psi&  \text{ by (A4$'$)};\\
 (\Box_{X_1} \forall \overrightarrow{Y_1} \Theta_1 \land \dots \land \Box_{X_n} \forall \overrightarrow{Y_n} \Theta_n) \to \Box_{\widetilde{X}} \forall \overrightarrow{Y} \Psi & \text{ by (A1$'$) }\Box_{X_i} \forall \overrightarrow{Y_i} \Theta_i \to \Box_{\widetilde{X}} \forall \overrightarrow{Y_i} \Theta_i;\\
 (\Box_{X_1}  \Theta_1 \land \dots \land \Box_{X_n}  \Theta_n) \to \Box_{\widetilde{X}} \forall \overrightarrow{Y} \Psi & \text{ since } \Box_{X_i}  \Theta_i\leftrightarrow\Box_{X_i} \forall \overrightarrow{Y_i} \Theta_i ;\\
\end{array}
$$
Taking into account that 
$X_i=FV^+(\Theta_i)$ and formulas $\Box_{Z_i}\Theta_i$ are $Var(\Gamma)$--close and thus do not contain bound occurrences of variables from  $Var(\Gamma)$, we conclude that $X_i\subseteq Z_i$. 
Then by axiom (A1$'$) we get
$\Box_{Z_i} \Theta_i\rightarrow \Box_{X_i} \Theta_i$.
We continue the derivation
$$
\begin{array}{ll}
 (\Box_{Z_1}  \Theta_1 \land \dots \land \Box_{Z_n}  \Theta_n) \to \Box_{\widetilde{X}} \forall \overrightarrow{Y} \Psi & \text{ since } \Box_{Z_i}  \Theta_i\to\Box_{X_i}  \Theta_i ;\\
 (\Box_{Z_1}  \Theta_1 \land \dots \land \Box_{Z_n}  \Theta_n) \to 
\Box_{X} \forall \overrightarrow{Y} \Psi & \text{ by } (A1')\text{ since }X\subseteq \widetilde{X}  ;\\
(\Box_{Z_1}  \Theta_1 \land \dots \land \Box_{Z_n}  \Theta_n) \to 
\Box_{X} \Psi & \text{ since } \Box_{X}  \Psi\leftrightarrow\Box_{X} \forall \overrightarrow{Y} \Psi.\\
\end{array}
$$
Therefore by 
Lemma \ref{lm_max_sets} $\Box_X \Psi \in Form(\Gamma)$,  contradiction. 

Thus, $S$ is $\cL(Var(\Gamma))$--consistent. We apply Lemma \ref{lm_extension} to  the consistent set of formulas $S$ and sets of variables $V_1= Var(\Gamma) \cup \{ v_1, \ldots, v_n \}$ and $V_2\supseteq V_1$ such that $V_2\setminus V_1$ is countable and the complement of $V_2$ to $Var^+$ is countable. It gives us the set of formulas $S^+\supseteq S$ which is  $\cL^h(V)$-maximal $\cL(V)$-consistent $\exists$--complete w.r.t. $V_2$. Put $\Delta = (S^+,V_2)$.  
By construction $\Delta\in R^c(\Gamma)$. 
Since $\neg\Psi[y_1/v_1,\ldots,y_n/v_n]\in S\subseteq S^+= Form(\Delta)$, by induction hypothesis we obtain $(\mathcal{M}^c, \nu^c ), \Delta \not \Vdash \Psi[y_1/v_1,\ldots,y_n/v_n]$. Therefore, $(\mathcal{M}^c, \nu^c), \Gamma \not \Vdash \Box_X \Psi$. 

\item $\Phi := \term{t}{X} \Psi$.

$(\Leftarrow)$ 

Suppose that $\term{t}{X} \Psi \in Form(\Gamma)$. Firstly, by definition of $\nu^c$ and $W^c$ we have $\nu^c (X)=X\subseteq Var(\Gamma)=D^c_\Gamma$. 
Secondly, $\Box_X \Psi \in Form(\Gamma)$
and we repeat the proof, given above, to show that $(\mathcal{M}^c, (\nu^c)^{y_1, \ldots, y_n}_{v_1, \ldots, v_n}), \Delta \Vdash \Psi$
  for all $\Delta \in R^c(\Gamma)$ and all $v_1, \ldots, v_n \in Var(\Delta)$.
It remains to show that $\Gamma\in\cE^c(t,\Psi,\nu^c\upharpoonright(FV^h(\Psi)\cap X))$, that is, 
$$
\begin{array}{c}
Im(\nu^c\upharpoonright(FV^h(\Psi)\cap X))\subseteq Var(\Gamma)\ \mbox{ and }\\
(\term{t}{Y}\Psi) (\nu^c\upharpoonright(FV^h(\Psi)\cap X))\in Form(\Gamma)\ \mbox{ for }
Y=Dom(\nu^c\upharpoonright(FV^h(\Psi)\cap X))\cap FV^h(\Psi) \\
\end{array}
$$ 
Note that $Dom(\nu^c\upharpoonright(FV^h(\Psi)\cap X)) = FV^h(\Psi)\cap X = FV^+(\Psi)\cap X \subseteq Var^+$.
Since $\nu^c$ is canonical, $\nu^c(x)=x$ for each $x\in X$, whence the first condition is true. Then, 
$Y=Dom(\nu^c\upharpoonright(FV^h(\Psi)\cap X))\cap FV^h(\Psi) = FV^+(\Psi)\cap X$ and $\nu^c$ is identical on $Var^+$, so 
$(\term{t}{Y}\Psi) (\nu^c\upharpoonright(FV^h(\Psi)\cap X))$ coincides with 
$(\term{t}{FV^+(\Psi)\cap X}\Psi)$. So, by Axiom (A1) and Lemma \ref{lm_max_sets}
from $(\term{t}{X}\Psi)\in Form(\Gamma)$ we get $(\term{t}{FV^+(\Psi)\cap X}\Psi)\in Form(\Gamma)$ and the second condition is true too.

$(\Rightarrow)$ 

Assume that $\term{t}{X} \Psi \not \in Form(\Gamma)$. 
Let us show that 
$\Gamma\not\in\cE^c(t,\Psi,\nu^c\upharpoonright FV^h(\Psi)\cap X)$, namely, that 
\begin{equation}
\label{eq100}
(\term{t}{Y} \Psi)(\nu^c\upharpoonright FV^h(\Psi)\cap X)\not\in Form(\Gamma)
\mbox{ for } Y=FV^h(\Psi)\cap Dom(\nu^c\upharpoonright FV^h(\Psi)\cap X).
\end{equation}
Since $\term{t}{X} \Psi \not \in Form(\Gamma)$ is a $V$--closed formula, 
$X\subseteq Dom(\Gamma)\subseteq V$. 
Therefore $\nu^c(x)=x$ for all $x\in X$ and
$$
\begin{array}{c}
Y=FV^h(\Psi)\cap Dom(\nu^c\upharpoonright FV^h(\Psi)\cap X)=
FV^h(\Psi)\cap  X \subseteq X\\
\nu^c\upharpoonright FV^h(\Psi)\cap X \mbox{ is identical }\\
\end{array}
$$
Therefore formulas  
$(\term{t}{Y} \Psi)(\nu^c\upharpoonright FV^h(\Psi)\cap X)$ and 
$\term{t}{Y} \Psi$ coincide. 
Since $X\subseteq Y$, if $\term{t}{Y} \Psi\in Form(\Gamma)$ then by Axiom $(A2)$
$\term{t}{X} \Psi\in Form(\Gamma)$, contradiction.
\end{itemize}
\end{proof}

\subsection{Proof of the Strong Completeness}
Now we are ready to prove Theorem \ref{st_compl}. 

\begin{proof}
    Suppose that $CS$ is a constant specification, $\Gamma$ is a set of closed formulas and $\Phi$ is a closed formula in language $\cL$. Suppose that $\Gamma \not\vdash_{CS} \Phi$. Then $\Gamma \cup \{ \neg \Phi\}$ is a $\cL$--consistent using $CS$ set of closed formulas. 
    
    Consider the canonical model $\cM^c$ of Definition \ref{can_mod}.
    By Lemma \ref{lm_canonical_model}, it is  a Fitting model meeting $CS$. 
    
    Let $V$ be a countable subset of variables from $Var^+$ having countable complement to $Var^+$. Then by Lemma \ref{lm_extension}, there exists a set of $V$-closed formulas $S^+$ such that 
    $S^+$ is $\cL^h(V)$--maximal, $\cL(V)$--consistent using $CS$, $\exists$--complete w.r.t. $V$ and formulas from
    $\Gamma \cup \{ \neg \Phi\}$ belong to $S^+$. Take $\Delta = (S^+,V)$, then  $\Delta \in W^c$. Thus, $\Gamma \subseteq Form(\Delta)$ and $\Phi \not \in Form(\Delta)$. Since formulas $\Gamma$ and $\Phi$ are closed formulas of the original language $\cL$, they all are $V$-closed, therefore  $(\mathcal{M}^c, \nu^c ), \Delta \Vdash \Gamma$ and $(\mathcal{M}^c, \nu^c), \Delta\Vdash\neg \Phi$ due to the Lemma \ref{lm_truth}. 
\end{proof}

\subsubsection*{Acknowledgement}
We are grateful to S.N. Artemov for encouragement and discussions during our work. We also like to thank V.B. Shehtman and D.Shamkanov for the possibility to present our work at their research seminars and many valuable comments and to V.N. Krupski who was the first reader of our manuscript and found many confusions and mistakes. 

This work was supported by the Russian Science Foundation under grant no. 23-11-00104, \url{https://rscf.ru/en/project/23-11-00104/ }.

\bibliographystyle{apacite}
\bibliography{all_references}

\end{document}